\newcommand{\be}{\begin{equation}}
\newcommand{\ee}{\end{equation}}
\newcommand{\bea}{\begin{eqnarray}}
\newcommand{\eea}{\end{eqnarray}}
\newcommand{\bean}{\begin{eqnarray*}}
\newcommand{\eean}{\end{eqnarray*}}
\newcommand{\irm}{{\rm i}}
\newcommand{\e}{{\rm e}}
\renewcommand{\d}{{\rm d}}
\newcommand{\cl}[1]{{\mathcal #1}}
\newcommand{\pa}{\partial}
\newcommand{\vb}[1]{\mathbf{#1}}
\newcommand{\ds}{\displaystyle}
\newcommand{\pdiff}[2]{\frac{\partial #1}{\partial #2}}
\newcommand{\bN}{\mathbb{N}}
\newcommand{\bR}{\mathbb{R}}
\newcommand{\clA}{\cl{A}}
\newcommand{\eq}[1]{(\ref{#1})}
\renewcommand{\sec}[1]{sec.\ \ref{#1}}
\newcommand{\fig}[1]{Fig.\ \ref{#1}}
\newcommand{\tr}{{\rm tr}}
\renewcommand{\Re}{{\rm Re}}
\newcommand{\sgn}{\mathrm{sgn}}
\newcommand{\pic}[4]
{
 \begin{figure}
 \begin{center}
 \includegraphics[height=#3]{#4}
 \end{center}
 \caption{\label{#1} #2}
 \end{figure}
}
\newtheorem{theorem}{Theorem}[section]
\newtheorem{proposition}[theorem]{Proposition}
\newtheorem{definition}[theorem]{Definition}
\newenvironment{proof}[1][Proof]{\begin{trivlist}
\item[\hskip \labelsep {\bfseries #1}]}{\end{trivlist}}
\newcommand{\qed}{\nobreak \ifvmode \relax \else
      \ifdim\lastskip< 1 em \hskip-\lastskip
      \hskip1.0em plus0em minus0.5em \fi \nobreak
      \vrule height0.75em width0.75em depth0 em\fi}
\newcommand{\mbf}[1]{
\mathchoice{\hbox{\boldmath $\displaystyle #1$}}{\hbox{\boldmath $\textstyle #1$}}{\hbox{\boldmath $\scriptstyle #1$}}{\hbox{\boldmath $\scriptscriptstyle #1$}}}
\newcommand{\gb}{\vb{g}}
\newcommand{\hb}{\vb{h}}
\newcommand{\Gb}{\vb{G}}
\newcommand{\Xb}{\vb{X}}
\newcommand{\hU}{\hat{U}}
\newcommand{\rd}{\mathrm{d}}
\newcommand{\lambdab}{\mbf{\lambda}}
\begin{document}

\title{On the semiclassical limit of 4d spin foam models}
\author{Florian Conrady}
\email{fconrady@perimeterinstitute.ca}
\affiliation{Perimeter Institute for Theoretical Physics, Waterloo, Ontario, Canada}
\author{Laurent Freidel}
\email{lfreidel@perimeterinstitute.ca}
\affiliation{Perimeter Institute for Theoretical Physics, Waterloo, Ontario, Canada}
\preprint{PI-QG-92}

\begin{abstract}
We study the semiclassical properties of the Riemannian spin foam models with Immirzi parameter
that are constructed via coherent states. We show that, in the semiclassical limit, the quantum spin foam amplitudes of an arbitrary triangulation are exponentially suppressed if the face spins do not correspond to a discrete geometry.
When they do arise from a geometry, the amplitudes reduce to the exponential of i times the Regge action.
Remarkably, the dependence on the Immirzi parameter disappears in this limit.
 \end{abstract}

\maketitle

\section{Introduction}

Loop quantum gravity (LQG) is an approach to canonical non--perturbative quantum gravity, where the first--order (or connection) formulation of gravity plays a central role. Spin foam models arise from the attempt to construct a corresponding covariant (or path--integral) formulation of quantum gravity.
In both the canonical and covariant approach, one central open issue is the semiclassical limit---the question whether these theories reduce to general relativity in suitable semiclassical and low--energy regimes. This problem has been explored by many authors and from various angles: for example, by the use of semiclassical states \cite{ARSgravitonsloops}--\cite{Giesel}, by the extraction of propagators from spin foam models \cite{particlescattering}--\cite{AlesciRovelliI}, by numerical simulations \cite{BaezChristensenHalfordTsang,ChristensenLivineSpeziale}
and by symmetry reduction \cite{BojowaldLivingReview}--\cite{AshtekarPawloswkiSinghVandersloot}. At this stage, however, there is no conclusive evidence that LQG or spin foam models in 4 dimensions do have a satisfactory low--energy behaviour. More tangible results have been obtained in 3 dimensions, where the classical and quantum theory are far simpler \cite{FreidelLouaprePonzanoReggerevisitedI,FreidelLouaprePonzanoReggerevisitedII,FreidelLivinePonzanoReggerevisitedIII,NouiPereztthreedLQGphysicalscalarproduct}.
 In this case, spin foams were coupled to point particles \cite{FreidelLouaprePonzanoReggerevisitedI,NouiPerezpointparticles}, and it was found that the semiclassical limit is related to a field theory on non--commutative spacetime \cite{FreidelLivinePonzanoReggerevisitedIII}.

Over the last years most investigations in 4 dimensions were focused on a model that was introduced by Barrett \& Crane (BC) in 1997 
\cite{BarrettCrane}.
It can be constructed by starting from a 4d BF theory and by imposing suitable constraints on the $B$--field \cite{DePietriFreidelrelativistic,ReisenbergerLR}. These constraints are called simplicity constraints and should restrict the $B$--field such that it becomes a wedge product of two tetrad one--forms. 
This procedure for imposing simplicity was subject to various criticisms: it was argued, in particular, that the BC model could not have a realistic semiclassical limit, since its degrees of freedom are constrained too strongly.

More recently, two new techniques for constructing spin foam models  were introduced that open the way to a resolution of this difficulty: the coherent state method \cite{LivineSpezialecoherentstates}, based on integrals over coherent states on the group, and a new way of implementing the simplicity constraints  \cite{EPR1}.
 These techniques led to the definition of several new spin foam models: firstly, a model by 
Engle, Pereira \& Rovelli (EPR) \cite{EPR1,EPR2}, and later models by Freidel \& Krasnov (FK$\gamma$) \cite{FK} that incorporate any value of the Immirzi parameter $\gamma\neq 1$ and reproduce the EPR model for $\gamma=0$ \cite{FK,LS}. Engle, Pereira, Livine \& Rovelli \cite{ELPR} also studied the inclusion of the Immirzi parameter and proposed models (ELPR$\gamma$) which differ from FK$\gamma$ for $\gamma>1$. A detailed comparison of the Riemannian models has been performed in \cite{CF1}. Lorentzian versions of these models have been constructed as well \cite{FK,PereiraLorentzian,ELPR}.

In this paper, we focus our study on the set of Riemannian models FK$\gamma$. 
The main reason for this is the result of  \cite{CF1}, where we showed that each of these models can be written  as a path integral with an explicit, discrete and local action. We will use this path integral representation to analyze the semiclassical properties of the spin foam models FK$\gamma$.

As shown in \cite{CF1}, all known 4d spin foam models with gauge group SO(4) can be written in a unified manner. One first introduces a vertex amplitude $A_{v}(j^{\pm}_{f},l_{e},k_{ef})$ which depends on a choice of SO(4) representations for each face $f$ of the spin foam, a choice of SU(2) intertwiners $l_{e}$ for each edge, and a choice of SU(2) representations $k_{ef}$ for each ``wedge'' (i.e.\ each pair $(ef)$).
This vertex amplitude  is just the SO(4) 15j symbol with SO(4) representations expanded onto SU(2) ones (see \cite{CF1} for more details).
If one sums these vertex amplitudes without any constraint one simply obtains a spin foam representation of SO(4) BF theory.

The spin foam models for gravity arise from two restrictions: firstly, a restriction on the SO(4) spins in terms of the Immirzi parameter $\gamma$, namely
\be\label{j+-}
\frac{j^+}{j^-} = \frac{1+\gamma}{|1-\gamma|}\,,
\ee
which implements part of the simplicity constraints\footnote{These models are only defined for $\gamma \neq 1$. We also assume that $\gamma \ge 0$, since a change of sign $\gamma \to -\gamma$ is equivalent to swapping $j^+$ and $j^-$.}.
This implies that there is only one free SU(2) spin per face, denoted by $j_f$. 
The second  restriction pertains to the set of SU(2) ``wedge'' representations  that one should sum over. It is expressed by the choice of a non--trivial measure $D^{\gamma}_{j,k}$. The cross simplicity constraints require \cite{FK,ELPR}  that this measure should be peaked around $k= j^{+}-j^{-}$ for 
$\gamma>1$. In the ELPR$\gamma$ models, this constraint is imposed strongly, \`a la Barrett--Crane, while in the coherent state construction of FK$\gamma$ it is implemented weakly.

The partition function of the spin foam models is given by a sum over spin foams that reside on the dual of a triangulation $\Delta$ and satisfy the above constraints: 
\be
Z_{\Delta}^{\gamma} =\sum_{j_{f}}  \prod_{f}\d_{j_{f}^{\gamma+}} \rd_{j_{f}^{\gamma-}} W^\gamma_\Delta(j_f)\,,
\ee
where
\be \label{Z}
\quad  \quad 
W^\gamma_\Delta(j_f) \equiv \sum_{l_{e},k_{ef}} \prod_{e} \rd _{l_{e}} \prod_{ef} \rd_{k_{ef}}D^\gamma_{j_{f},k_{ef}}  \prod_{v} A_{v}^{\gamma}(j_{f},l_{e},k_{ef})\,.
\ee
Here, the amplitude $W^\gamma_\Delta(j_f)$ contains the sum over all intertwiner and wedge labels $l_e$ and $k_{ef}$, and can thus be regarded as an ``effective'' spin foam amplitude for given spins $j_f$. 
$\rd_{j}$ denotes the dimension of the spin $j$ representation.

The main focus of our work is to find the semiclassical asymptotics of this effective spin foam amplitude. As we will see in the next section, this amounts to determine the behaviour of $W^{\gamma}_\Delta(j_f)$ for large spins. We will find that, in this limit, the effective amplitude is exponentially suppressed if the spin labelling cannot be interpreted as areas of a discrete geometry. When the spins do arise from a discrete geometry, on the other hand, and when $\gamma > 0$, the effective amplitude $W^{\gamma}_\Delta(j_f)$ is given by the exponential of i times the Regge action. It is remarkable that the dependence on the Immirzi parameter drops out. The corresponding analysis for the EPR model yields that the exponent vanishes, i.e.\ the effective action is zero. 

The paper is organized as follows: in section \ref{pathintegralrepresentationofspinfoammodels}, we review the path integral representation that is used to derive the semiclassical approximation. In section \ref{semiclassicallimit}, we define the notion of semiclassical limit that we apply in this paper, and present the main result derived in the following sections. Section \ref{semiclassicalequations} states the equations which characterize the dominant contributions to the semiclassical limit. In \sec{rewritingtheequations} we  rewrite these equations and project them from SU(2)$\times$SU(2) to SO(4). 
In section \ref{discretegeometry}, we introduce definitions of co--tetrad, tetrad and spin connection on the discrete complex. These are needed in section \ref{discretegeometry}, where we show that the solutions to the equations are given by discrete geometries. Finally, in \sec{semiclassicalapproximationofeffectiveamplitude}, we put everything together and state the asymptotic approximation of the effective spin foam amplitude $W^{\gamma}_\Delta(j_f)$.

\section{Path integral representation of spin foam models}
\label{pathintegralrepresentationofspinfoammodels}

In this section, we review the path integral representation for the EPR and FK$\gamma$ models derived in ref.\ \cite{CF1} and introduce some notations and definitions for simplicial complexes and their duals.

In the following, $\Delta$ denotes a simplicial complex and $\Delta^*$ stands for the associated dual cell complex. We assume that $\Delta$ is \textit{orientable}.
We refer to cells of $\Delta$ as vertices $p$, edges $\ell$, triangles $t$, tetrahedra $\tau$ and 4--simplices $\sigma$. The 0--, 1-- and 2--cells of the dual complex $\Delta^*$ are called vertices $v$, edges $e$ and faces $f$ respectively. We will also need a finer complex, called ${\cal{S}}_\Delta$, which results from the intersection of the original simplicial complex $\Delta$ with the 2--skeleton of the dual complex $\Delta^*$. This leads to a subdivision of faces $f\subset\Delta^*$ into so-called wedges, and each edge $e\subset\Delta^*$ is split into two half-edges (see \fig{faceandwedge}b). We refer to oriented half--edges by giving the corresponding pair $(ve)$ or $(ev)$. When an edge in ${\cal{S}}_\Delta$ runs from the center of a face $f$ to the edge $e\subset\pa f$, it is denoted by the pair $(fe)$. A wedge is either labelled by a pair $ev$ or by the pair $ef$, where $f$ is the face that contains the wedge and $e$ is the edge adjacent to the wedge that comes \textit{first} w.r.t.\ the direction of the face orientation. 

Given ${\cal{S}}_{\Delta}$ and an orientation of its faces $f$, we define a discretized path integral that is equivalent to the spin foam sum \eq{Z}. The variables are spins $j_f$ on faces, SU(2) variables $u_e$ and $n_{ef}$ on edges and wedges respectively, and SU(2)$\times$SU(2) variables $\gb_{ve}$ and $\hb_{ef}$ on half--edges. The set of $(\gb_{ve},\hb_{ef})$ represents a discrete connection on the complex ${\cal S}_\Delta$. 
 We distinguish two types of connection variables, since there are two kinds of half--edges in ${\cal S}_\Delta$: half--edges $(ev)$ along the boundary $\pa f$ of a face $f$, and half--edges $(ef)$ that go from an edge $e$ in the boundary $\pa f$ to the center of the face $f$ (see \fig{variationinterior}).
 Given such a connection, and for a wedge orientation $[eve'f]$, we can construct the wedge holonomy
 $\Gb_{ef} = (G^+_{ef},G^-_{ef})$, where 
\be
\Gb_{ef} = \gb_{ev} \gb_{ve'} \hb_{e'\!f} \hb_{fe}\,.
\ee

\psfrag{f}{$f$}
\psfrag{e}{$e$}
\psfrag{e'}{$e'$}
\psfrag{e''}{$e^{''}$}
\psfrag{v}{$v$}
\psfrag{v'}{$v'$}
\psfrag{(a)}{(a)}
\psfrag{(b)}{(b)}
\pic{faceandwedge}{(a) Face $f$ of dual complex $\Delta^*$. (b) Subdivision of face $f$ into wedges. The arrows indicate starting point and orientation for wedge holonomies.}{3cm}{facewedge}

The other set of variables $(j_{f},u_{e},n_{ef})$ represent (pre--)geometrical data\footnote{A truly geometrical interpretation is only valid on--shell, when the closure constraint is imposed.}. As we will see in more detail later, one can think of $u_{e}$ as a unit 4--vector normal to the tetrahedron dual to $e$. The spin
$j_{f}$ determines the area of the triangle dual to $f$ and $n_{ef}$ represents a vector normal to this triangle in the subspace orthogonal to $u_{e}$.
We use these variables to define Lie algebra elements $\Xb_{ef}^{\gamma} = (X^{\gamma+}_{ef}, X^{\gamma-}_{ef})\in \mathrm{su(2)}\oplus\mathrm{su(2)}$ associated with wedges of ${\cal S}_\Delta$. They depend on the value of the Immirzi parameter $\gamma$ and are given by 
\be
X^{\gamma+}_{ef} \equiv \gamma^+ j_f \,n_{ef} \sigma_3  n_{ef}^{-1}\,,\qquad  X^{\gamma-}_{ef} \equiv -\gamma^- j_f \,u_e n_{ef} \sigma_3  n_{ef}^{-1} u_e^{-1}\,.
\label{Xintermsofjun}
\ee
Here, $\sigma_i$ denotes the Pauli matrices. The spins $j_f$ are arbitrary non--negative half--integers.
$\gamma^+$ and $\gamma^-$ are the integers with smallest absolute value that satisfy $\gamma^+ > 0$ and 
\be
\frac{\gamma^+}{\gamma^-} = \frac{\gamma+1}{\gamma-1}\,.
\ee
That is, if $\gamma > 1$, both $\gamma^+$ and $\gamma^-$ are positive integers, while for $\gamma < 1$, $\gamma^-$ is 
negative\footnote{In previous papers \cite{FK,CF1}, a different convention was used, where both $\gamma^+$ and $\gamma^-$ are positive. This entails minus signs in various formulas, depending on whether $\gamma > 1$ or $\gamma < 1$. With the present convention, we no longer need to make this distinction, since the minus signs are absorbed into $\gamma^-$.}.
In the following, we sometimes use the notation 
\be
j^{\gamma\pm}_f \equiv \left|\gamma^\pm\right| j_f\,.
\ee
In the particular cases $\gamma=0$ and $\gamma=\infty$ (corresponding to the EPR and FK model), 
one recovers the usual simplicity relations \cite{BarrettCrane,FKS}, i.e.\ $j^+ = j^- = j$. 

The action of the path integral is given by
\be
S^{\gamma}_{\Delta}(j_{f},u_{e},n_{ef}; \gb_{ve},\hb_{ef}) = \sum_{e,\,f\supset e}\; 
\left(S(X_{ef}^{\gamma+}; G^{+}_{ef}) +  S(X_{ef}^{\gamma-}; G^{-}_{ef})\right)\,,
 \label{action}
\ee
where
\bea
S(X;G) &\equiv& {2|X|} \ln \tr\left[\frac{1}{2}\left(\mathbbm{1} + \frac{X}{|X|}\right)G \right].
\label{actionFK}
\eea
In the last equality, $X=X^{i}\sigma_{i}$ is a SU(2) Lie algebra element, $G$ an SU(2) group element, $|X|^{2}\equiv X^i X_i$ and the trace is in the fundamental representation of SU(2). Note that by definition $|X^{\gamma\pm}_{ef}| = j^{\gamma \pm}_f$.
This action is invariant under  gauge transformation labelled by SU(2)$\times$ SU(2) group elements $\lambdab_{e},\lambdab_{f},\lambdab_{v}$
living at vertices, faces and edges of ${\cal S}_{\Delta}$:
\be
\label{gaugesymmetry}
\gb_{ev} \to  \lambdab_{e}\gb_{ev} \lambdab_{v}^{-1}\,,\quad
\hb_{ef} \to  \lambdab_{e}\hb_{ef} \lambdab_{f}^{-1}\,,\quad
n_{ef} \to \lambda^{+}_{e}n_{ef}\,,\quad 
u_{e} \to  \lambda^{-}_{e}u_{e} (\lambda^{+}_{e})^{-1}\,.
\ee
In order to evaluate this action for a general group element $G = P_G^0\mathbbm{1} + \irm P_G$, where $P_G = P_G^i\sigma_i$ and  $P_0^2 + |P_G|^2 = 1$,
it is convenient to decompose $G$ into a part parallel to $\hat{X}\equiv X/|X|$ and a part orthogonal to it:
\be
G = \sqrt{1-\left|P_G\times \hat{X}\right|^{2}}\left(\cos \Theta \mathbbm{1} + \irm\sin \Theta \hat{X} \right) 
+ \irm \left( P_G - (P_G\cdot \hat{X}) \hat{X} \right).
\ee
where $(P_G\times \hat{X})^i \equiv \epsilon^i{}_{jk} P_G^j \hat{X}^k$ and $\cos\Theta = P^0_G / \sqrt{1 - |P_G\times \hat{X}|^2}$.
The action is in general complex, since
\be
S(X;G) = |X| \ln\left(1-\left|P_{G}\times \hat{X}\right|^{2}\right)
+ 2\irm |X| \Theta\,.
\ee
It is important to note that the real part of this action is always negative; $\mathrm{Re}(S(X,G))\leq 0$. 
It is zero only if $\hat{X}$ is parallel to $P_{G}$ or equivalently if the Lie algebra element $X$commute with the group element $G$. In this case the action is purely imaginary and has the ``Regge'' form $S(X;G) = 2\irm |X| \Theta$.

As shown in \cite{CF1}, the spin foam models  FK$\gamma$ introduced in \cite{FK} and described in \eq{Z} can be written as
\be
Z^{\gamma}_{\Delta}(j_f)=\sum_{j_{f}}\prod_{f} \rd_{j_{f}^{\gamma+}}\rd_{j_{f}^{\gamma-}} W_{\Delta}^{\gamma}(j_{f}),
\ee
where the effective amplitude $W_{\Delta}^{\gamma}$ is obtained by integration over  all the variables\footnote
{\label{udep}  Note that thanks to the gauge symmetry described in \eq{gaugesymmetry} there is no need to integrate over the variables $u_e$. The effective amplitude obtained after integration over all variables except $j_f$ and $u_e$ is independent of $u_e$.} except $j_{f}$:
\be\label{Zeff}
W_{\Delta}^{\gamma}(j_{f}) = 
\int\prod_e \d u_e \prod_{e,\,f\supset e} \rd_{j_f}^{\gamma+}\rd_{j_f}^{\gamma-} \d n_{ef}\int  \displaystyle
\prod_{v,\,e\supset v} \d {\bf g}_{ev}   \prod_{e,\,f\supset e}\d {\bf h}_{ef}\; e^{S^{\gamma}_{\Delta}(j_{f},u_{e},n_{ef}; \gb_{ve},\hb_{ef})}.
\ee

\section{Semiclassical limit}
\label{semiclassicallimit}

In this section, we define the notion of semiclassical limit that we investigate in this paper, and state our main results.
We focus our interest on the effective amplitude $W_{\Delta}^{\gamma}(j_{f})$, which depends only on the scalars $j_{f}$ associated to each face.
In order to define a semiclassical limit we need to reinstate the $\hbar$--dependence and introduce dimensionful quantities. The spins $j_{f}$ are then proportional to the physical area.

The Immirzi parameter enters in the relationship between the discrete bivector field $X_{ef}^{\gamma IJ}$ and the dimensionful simple area bivector field $A_{ef}^{IJ}$ associated with the triangle dual to $f$: namely,\footnote{The map between bivectors $X_{ef}^{IJ}$ and Lie algebra elements $\Xb_{ef} = (X^{+ i}\sigma_i,X^{- i}\sigma_i)$ of $\mathrm{su(2)}\oplus\mathrm{su(2)}$ is given by $X^{\pm i}_{ef} = \frac{1}{2}\,\epsilon^i{}_{jk} X^{jk}_{ef} \pm X^{0i}_{ef}$.}
\be
\label{XA}
(16\pi \hbar  G) X^\gamma_{ef}  = \star  A_{ef} + \frac{1}{\gamma} A_{ef}\,,
\ee
when $ \gamma > 0$.

The simplicity of the area bivector implies that $|A_{ef}^{+}| =|A_{ef}^{-}| \equiv  \clA_{f}$, where  $\clA_{f}$  denotes the physical area of  the triangle dual to $f$. The relationship \eq{XA} can be written $(16\pi \hbar  G)\gamma X^{\gamma\pm}_{ef} =  (1\pm\gamma) A_{ef}^{\pm}$, which leads to 
\be
\frac{\clA_{f}}{8\pi \hbar G} = \left(\gamma^+ + \gamma^-\right) j_f\,.
\label{areaintermsofspins}
\ee 
 
We implement the semiclassical limit by taking $\hbar$ to zero, while keeping the physical dimensionful areas $\clA_{f}$ fixed. The previous equation \eq{areaintermsofspins} tells us that in this limit the spins $j_f$ are uniformly rescaled to infinity. Thus, the semiclassical regime is reached by taking the limit $N \to \infty$ of the amplitude $W^{\gamma}_\Delta(Nj_{f})$ in \eq{Zeff}.

Since the action is linear in $j_f$, this corresponds to a global rescaling of the action by $N$. Hence the limit $N\to\infty$ is controlled by the  stationary phase points of the exponent: the integral localizes as a sum over contributions from stationary phase points.
Moreover, as we have seen, the action is complex with a negative real part. As a result, stationary phase points which do not lie at the maximum $\Re(S_{\Delta}^{\gamma})=0$ are \textit{exponentially} suppressed. Altogether this means that the semiclassical limit is controlled by \textit{stationary points} of $S_{\Delta}^{\gamma}$ which are also \textit{maxima} of the real part $\Re(S_{\Delta}^{\gamma})$. A more detailed discussion of the asymptotic analysis is given in \sec{semiclassicalapproximationofeffectiveamplitude}.  
Before stating our main result, we have to recall that in the continuum the equivalence between gravity and the constrained BF formulation is only established if one imposes a condition of non--degeneracy on the $B$ field\footnote{see \cite{DePietriFreidelrelativistic, ReisenbergerLR} for a more detailed discussion of this point and the potential problems due to degenerate configurations in the path integral.}. We therefore need to distinguish between non--degenerate and degenerate configurations in our analysis. This is achieved by splitting the amplitude \eq{Zeff} into two parts\footnote{see \cite{FLouapre6j} for an analysis of stationary points of group integrals representing the $6j$-- and $10j$--symbol using a similar splitting.},
\be
W^{\gamma}_\Delta(j_{f}) = W^{ \mathrm{ND}\gamma}_{\Delta }(j_{f}) + W^{\mathrm{D}\gamma}_{\Delta }(j_{f})\,,
\ee
where $W^{\mathrm{ND}\gamma}_{\Delta }(j_{f})$ is defined by the integral \eq{Zeff} subject to the constraint that
\be 
\label{nondeg}
\left|\epsilon_{IJKL} X_{ef}^{IJ} (g_{ee'}\triangleright X_{e'f'})^{KL}\right| \quad > \quad 0
\ee 
for all pairs of wedges $(ef)$ and $(e'f')$ that share a vertex, but do not share an edge. Here, 
$g_{ee'}\triangleright X_{e'f'} \equiv g_{ee'}\triangleright X_{e'f'} g^{-1}_{ee'}$ and  $g_{ee'}\equiv g_{ev}g_{ve'}$. 
The term $W^{\mathrm{D}\gamma}_{\Delta }(j_{f})$ denotes the complementary integral consisting of degenerate configurations.

One of the characteristics of 4d spin foam models is the assignment of spins $j_f$ to each face $f$ of the dual complex $\Delta^*$ and of corresponding areas $\clA_t(j_f)$ to each triangle $t$ of $\Delta$. In contrast, Regge calculus is based on an assignment of a discrete metric to the complex, defined by lengths $l_\ell$ associated with each edge $\ell\subset\Delta$ and subject to triangle inequalities. The areas $\clA_t$ of triangles $t$ dual to faces $f$ are then determined as a function $\clA_t(l_\ell)$ of the edge lengths $l_\ell$. It is well--known \cite{barrett, Makela, dittrichspeziale} that for an arbitrary assignment of spins $j_f$, there is, in general, no set of $l_\ell$'s such that $\kappa j_f= \clA_t(l_\ell)$. The set of areas $\clA_t$ determines at least one flat geometry inside each 4--simplex, but the geometries of tetrahedra generally differ, when viewed from different 4--simplices. In the following, we will call an assignment of spins $j_f$ Regge--like if there is a discrete metric $l_\ell$, $\ell\subset\Delta$, such that $\clA_t(j_f) = \clA_t(l_\ell)$.

Our principal result is that the set of stationary points of the integral \eq{Zeff} which are non--degenerate and have a maximal real part, are Regge--like. Moreover, the on-shell action is exactly the Regge action! This result relies on the specific realization of the spin foam model in terms of the local action \eq{action} which is valid for the FK$\gamma$ version \cite{FK} of the model. It does not apply to the ELPR$\gamma$ construction \cite{ELPR}, which is different from FK$\gamma$ for $\gamma>1$ (see \cite{CF1} for a comparison).

More precisely, a configuration $(j_f, u_e, n_{ef},\gb_{ev},\hb_{fe})$ is a solution of the conditions
\be
 \pdiff{S}{n_{ef}} = \pdiff{S}{u_e} = \pdiff{S}{g_{ev}} = \pdiff{S}{h_{ef}} = 0\,,\qquad \Re S = 0\,,
\label{stationaryandmaximal} 
\ee
and eq.\ \eq{nondeg}, if and only if the spins $j_f$, $f\subset\Delta^*$, are Regge--like. 
In this case, there exist edge lengths $l_\ell$, $\ell\subset\Delta$, such that $j_f = (\gamma^+ + \gamma^-)^{-1} \clA_f(l_\ell)$ for $f = t^*$.
Moreover, for such a solution we have, as long as $\gamma\neq 0$, that
\be
S_\Delta^\gamma = \sum_f \clA_f(l_\ell) \Theta_f(l_\ell) \equiv S_R(l_\ell)\,,
\ee
where $\Theta_f(l_\ell) $ is the deficit angle associated with the face $f$ and $\clA_f(l_\ell)$ is the area in Planck units. 
If $\gamma=0$, the on-shell action vanishes, i.e.\ $S_{\Delta}^{0}= 0$, in agreement with the fact that $\gamma=0$ corresponds to a topological theory classically (see \cite{FK}).

It is important to note that the dependance on $\gamma$ has \textit{disappeared} from the functional form of the action.
This parallels the behaviour of the continuum theory, where the $\gamma$ dependence drops out classically, once we solve the torsion equation.
It also provides a non--trivial check on whether the chosen spin foam model captures the right semiclassical dynamics. The dependence on the Immirzi parameter arises only at the quantum level as a quantization condition on the area\footnote{The dimensionful area has to satisfy the condition that $\clA_t (8\pi\hbar G)^{-1} (\gamma^+ + \gamma^-)^{-1}$ is a half--integer. This quantization condition becomes invisible in the semiclassical limit $\hbar\to\infty$.}, similar as in canonical loop quantum gravity.

These results are derived in section \ref{solutions} and \ref{semiclassicalapproximationofeffectiveamplitude}, and imply the following statements on the effective amplitude $W^{\mathrm{ND}\gamma}_{\Delta }(j_f)$: as $N\to\infty$, the amplitude $W^{\mathrm{ND}\gamma}_{\Delta }(Nj_f)$ is exponentially suppressed\footnote{That is, the limit $N\to \infty$ of  $N^n W^{\mathrm{ND}\gamma}_{\Delta }(Nj_f)$ is equal to zero for all $n\in\bN$.}, if the spins $j_f$, $f\subset\Delta^*$, do not arise from a Regge geometry. On the other hand, if the $j_f$'s are Regge--like, there is a non--zero function $c_\Delta(j_f)$, independent of $N$, such that
\be
W^{\mathrm{ND}\gamma}_{\Delta }(Nj_f) \sim \frac{c_\Delta(j_f)}{N^{\frac{r_\Delta}2}} \left(\exp\left(\irm N S_R\right) + \mbox{c.c}\right)
\label{asymp}
\ee
as $N\to\infty$. Here, $c.c$ stands for the complex conjugate.
The number $r_\Delta$ is the rank of the Hessian and given by
\be
r_{\Delta} = 33 E - 6V - 4F\,,
\ee
with $V$, $E$ and $F$ denoting the number of vertices, edges and faces of $\Delta^*$.

This shows that, in the semiclassical limit, the effective amplitude $W^{ND\gamma}_{\Delta}(j_f)$ is described by an effective action, which is the Regge action.
If there are several discrete geometries $l_\ell$, $\ell\subset\Delta$, for a given set $j_f$, $f\subset\Delta^*$, one should sum over them in the asymptotic evaluation \eq{asymp}. In the following sections, we prove the above statements and study in detail the non--degenerate solutions to eqns.\ \eq{stationaryandmaximal}.

\section{Classical equations}
\label{semiclassicalequations}

We will now derive the explicit form of the equations that follow from the conditions $\delta S = 0$ and $\Re S = 0$. 

\subsection{Variation on interior and exterior edges}

\psfrag{e1}{$e$}
\psfrag{e2}{$e'$}
\psfrag{e3}{$e''$}
\psfrag{v1}{$v$}
\psfrag{v2}{$v'$}
\psfrag{v3}{$v''$}
\pic{variationinterior}{Variation of the group variable $\hb_{fe}$ on the edge $fe$ in the interior of the face $f$.}{3cm}{variationinterior}

We first consider the variation of the variable $h^\pm_{e'f}$.
Since the edge $e'f$ belongs to two wedges, denoted $ef$ and $e'f$, the variation of the action involves only two terms. If $e$ is the edge preceding $e'$ along the orientation of the face $f$,
one has $\Gb_{ef}=\gb_{ev}\gb_{ve'}\hb_{e'f}\hb_{fe}$ and 
$\Gb_{e'f}=\gb_{e'v'}\gb_{v'e''}\hb_{e''f}\hb_{fe'}$.
We can write the variation of the action as (see \fig{variationinterior}) 
\bea
\delta S 
&=& 
2j^{\gamma\pm}_f\tr\left[
\left(
\frac{h^\pm_{e'e} (\mathbbm{1} + \hat{X}^\pm_{ef}) G^\pm_{ef} \left(h^\pm_{e'e}\right)^{-1}}{\tr\left((\mathbbm{1} + \hat{X}^\pm_{ef}) G^\pm_{ef}\right)}\, 
-
\frac{(\mathbbm{1} + \hat{X}^\pm_{e'f}) G^\pm_{e'f}}{\tr\left((\mathbbm{1} + \hat{X}^\pm_{e'f}) G^\pm_{e'f}\right)}\, 
\right)  \delta h^\pm_{e'f}h^\pm_{fe'}
\right] =0\,,
\label{variationinterioredge}
\eea
where we use the abbreviations \setlength{\jot}{0.4cm}
\be
g^\pm_{ee'} \equiv g^\pm_{ev}g^\pm_{ve'}\,, \quad
h^\pm_{ee'} \equiv h^\pm_{ef}h^\pm_{fe'},\,\quad 
(h^{\pm}_{ef})^{-1}=h_{fe}^{\pm}, \quad (g^{\pm}_{ev})^{-1}=g_{ve}^{\pm}\,,
\ee \setlength{\jot}{0cm}
and $\hat{X}^\pm_{ef} \equiv  {X}^{\gamma\pm}_{ef}/ j^{\gamma\pm}$, which is independent of $\gamma$.

To write these equations in a more compact manner, let us define the matrix element 
\be
\hat{Y}^\pm_{ef}\equiv \frac{2(\mathbbm{1} + \hat{X}^{\pm}_{ef})}{\tr\left((\mathbbm{1} + \hat{X}^{\pm}_{ef}) G^\pm_{ef}\right)}.
\ee
Since $\delta h h^{-1}$ is in the Lie algebra, we conclude from \eq{variationinterioredge} that the traceless part of the expression in round brackets has to be zero. Moreover, since $\tr(Y_{ef}^{\pm}G_{ef}^{\pm}) = 2$, one simply gets 
\be
\fbox{\parbox{8cm}{$$
h^\pm_{fe} \left(\hat{Y}^\pm_{ef}G^{\pm}_{ef}\right) h^\pm_{ef} 
= 
h^\pm_{fe'} \left(\hat{Y}^\pm_{e'f}G^{\pm}_{e'f}\right) h^\pm_{e'f}\,.
$$}}\label{intclos}
\ee
We refer to this equation as the \textit{interior closure constraint}, since it encodes a relation between wedges in the interior of the face $f$.

\psfrag{e}{$e$}
\psfrag{v}{$v$}
\psfrag{ei}{$e_i$}
\psfrag{ej}{$e_j$}
\psfrag{fi}{$f_i$}
\psfrag{fj}{$f_j$}
\pic{variationexterior}{Variation of the group variable $\gb_{ev}$ on the segment $ev$ between faces.}{5cm}{variationexterior}

\setlength{\jot}{0.4cm}
Next, we vary a group variable $\gb_{ev}$ on a half--edge $ev$. This calculation is slightly more involved, since the orientation of different faces has to be taken into account.  At the edge $e$, four faces $f_i$, $i = 1,\ldots,4$, intersect. Let $I^+_e$ be the set of indices $i$ for which the orientation of $f_i$ is ``ingoing'' at the vertex $v$, i.e.\ parallel to the orientation of the half--edge $(ev)$. In these cases, the wedge holonomy has the form $G_{ef_i}= g_{ev}g_{ve_{i}}h_{e_if_i}h_{f_ie}$. Denote by $I^-_e$ the complementary set for which the holonomy is $G_{e_if_i}= g_{e_iv}g_{ve}h_{ef_i}h_{f_ie_i}$. Then, variation of $g^\pm_{ev}$ gives
\bea
\delta S 
&=&
\tr\left[
\left(
\sum_{i\in I^+_e}{j^\pm_{f_i}}\,  G^\pm_{ef_i}\hat{Y}^\pm_{ef_i}
-
\sum_{j\in I^-_e} {j^\pm_{f_j}}\, g^\pm_{ee_j} G^\pm_{e_jf_j}\hat{Y}^\pm_{e_jf_j} \left(g^\pm_{ee_j}\right)^{-1}
\right) \delta g^{\pm}_{ev} g^{\pm}_{ve}
\right]= 0
\eea
Again, the traceless part of the quantity in round brackets has to be zero. Therefore,
\be
\fbox{\parbox{12.5cm}{$$
\sum_{i\in I^+_e} {j^\pm_{f_i} \left(G^\pm_{ef_i}\hat{Y}^\pm_{ef_i} - \mathbbm{1}\right)}-
\sum_{j\in I^-_e}{j^\pm_{f_j}\, g^\pm_{ee_j} \left(G^\pm_{e_jf_j}\hat{Y}^\pm_{e_jf_j} - \mathbbm{1}\right)} \left(g^\pm_{ee_j}\right)^{-1} 
= 0\,.
$$}}\label{extclos}
\ee
This equation relates wedges from different faces, so we call it the \textit{exterior closure constraint}.

\subsection{Variation of $u_e$ and $n_{ef}$ and maximality}

For the variation w.r.t.\ $n_{ef}$, we use the definition \eq{Xintermsofjun} of $X^\pm_{ef}$ and get
\be
\delta X^{+}_{ef} = \left[\delta n_{ef} n_{ef}^{-1}, X^{+}_{ef}\right]\,,\quad\quad
\delta X^{-}_{ef} = \left[ u_{e}\delta n_{ef} n_{ef}^{-1} u_{e}^{-1}, X^{-}_{ef}\right]\,.
\ee
The variational equation for $n_{ef}$ is therefore given by
\be
\label{neq}
\left[\hat{Y}^+_{ef}, G^+_{ef}\right] +
 u_e^{-1}\left[\hat{Y}^-_{ef}, G^-_{ef}\right] u_e= 0\,.
\ee
Similarly, by varying $u_e$ one obtains
\be
\label{ueq}
\sum_{f\supset e} \left[\hat{Y}^-_{ef}, G^-_{ef}\right]\ = 0\,.
\ee

The action being complex, its stationarity is not enough to determine the dominant contribution to the semiclassical limit. One also has to demand that the stationary points are a maximum of the real part of the action.
Since
\be
\mathrm{Re}(S_{\Delta}^{\gamma})= 
\sum_{(ef)} \left[j^{\gamma+}_{f} \ln\left(1-\frac14 \left|\left[\hat{X}^{+}_{ef},G^{+}_{ef}\right]\right|^2\right)
+ j^{\gamma-}_{f} \ln\left(1-\frac14 \left|\left[\hat{X}^{-}_{ef},G^{-}_{ef}\right]\right|^2\right)\right]\,,
\ee
and $\left|\left[X,G\right]\right|^2\geq 0$, this is  maximal when
\be\label{max}
\fbox{\parbox{6cm}{$$\left[G^{+}_{ef},\hat{X}^{+}_{ef}\right]=0=\left[G^{-}_{ef},\hat{X}^{-}_{ef}\right]\,.
$$}}
\ee
Note that the maximum condition implies that the stationarity equations (\ref{neq},\ref{ueq}) for $n_{ef}$ and $u_{e}$ are automatically fulfilled.
Moreover it is important to notice that this relation leads to a 
drastic simplification of the closure constraints, since it leads to the identity:
\be
\label{YGX}
\hat{Y}^{\pm}_{ef}G_{ef}^{\pm} = G_{ef}^{\pm}\hat{Y}^{\pm}_{ef} = \mathbbm{1} + \hat{X}^{\pm}_{ef}\,.
\ee

\section{Rewriting the equations}
\label{rewritingtheequations}

\subsection{Parallel transport to vertices}

To analyze the variational equations, it is convenient to make a change of variables. The original variables 
$\Xb_{ef}^{\gamma}, \Gb_{ef}$ are based at the edge $e$, which means that under gauge transformation they transform as 
$(\Xb_{ef}, \Gb_{ef}) \to (\lambdab_{e}\Xb_{ef}^{\gamma}\lambdab_{e}^{-1}, \lambdab_{e}\Gb_{ef}\lambdab_{e}^{-1})$.
The new variables are based at $v$ and defined by parallel transporting the original variables to the nearest vertices of the dual complex $\Delta^*$:  
\bea
\Xb_{ef}^{\gamma}(v) \equiv \gb_{ve} \Xb_{ef}^{\gamma} \gb^{-1}_{ve}\,, 
\label{paralleltransport}\quad\quad\Gb_{ef}(v) \equiv \gb_{ve} \Gb_{ef} \gb^{-1}_{ve}\,, \quad \quad
u_e(v) \equiv g^-_{ve} u_e \! \left(g^+_{ve}\right)^{-1}\,.
\eea
Since every edge $e$ intersects with two vertices $v$ and $v'$, this leads to a doubling of the number of variables.
This is compensated by equations that relate variables at neighbouring vertices $v$ and $v'$: i.e.\
\be
\Xb_{ef}^{\gamma}(v') = \gb_{v'v} \Xb_{ef}^{\gamma}(v) \left(\gb_{v'v}\right)^{-1}\,,\quad\quad 
\quad \quad u_e(v') = g^-_{v'v} u_e(v) \! \left(g^+_{v'v}\right)^{-1}\,.
\label{relationbetweenvertices}
\ee 
In terms of the new variables, the interior closure constraint \eq{intclos} becomes 
\be
h^\pm_{fe} g^\pm_{ev}\,
{\left(\hat{Y}^\pm_{ef}(v) G^\pm_{ef}(v)\right)}\left(h^\pm_{fe} g^\pm_{ev}\right)^{-1}
= 
h^\pm_{fe'}g^\pm_{e'v}{\left(\hat{Y}^\pm_{e'f}(v) G^\pm_{e'f}(v)\right)} \left(h^\pm_{fe'}g^\pm_{e'v}\right)^{-1}\,.
\ee
Thus,
\be
{G^\pm_{ef}(v) \hat{Y}^\pm_{ef}(v)}= 
{\hat{Y}^\pm_{e'f}(v) G^\pm_{e'f}(v)},\label{interiorclosureconstraint}
\ee
where the edge $e'$ follows the edge $e$ in the orientation of $f$.
Likewise, after conjugation by $g_{ve}^{\pm}$, the exterior closure constraint takes the form
\be
\sum_{i\in I^+_e} {j^\pm_{f_i} \left(G^\pm_{ef_i}(v) \hat{Y}^\pm_{ef_i}(v) - \mathbbm{1}\right)}
-
\sum_{j\in I^-_e} {j^\pm_{f_j} \left(G^\pm_{e_jf_j}(v) \hat{Y}^\pm_{e_jf_j}(v) - \mathbbm{1}\right)}
= 0\,.
\label{exteriorclosureconstraintY}
\ee \renewcommand{\arraystretch}{3}
If we impose, in addition, the maximality constraint \eq{YGX}, the closure constraints simplify and we remain with the following set of equations:
\be\label{eqsd}
\fbox{\parbox{15 cm}{
\[
\begin{array}{l@{\quad}l@{\quad}l}
\left[\Gb_{ef}(v),\Xb_{ef}^{\gamma}(v)\right] = 0, 
& 
\Xb^{\gamma}_{ef}(v') = \gb_{v'v} \Xb^{\gamma}_{ef}(v) \left(\gb_{v'v}\right)^{-1},
&
u_e(v') = g^-_{v'v} u_e(v) \! \left(g^+_{v'v}\right)^{-1}, 
\\ 
\Xb_{ef}^{\gamma}(v) =  \Xb_{e'f}^{\gamma}(v)\,, & \ds\sum_{f\supset e} \epsilon_{ef}(v) \Xb_{ef}^{\gamma}(v) = 0. & \\
\end{array}
\]
}}
\ee
$\epsilon_{ef}(v)$ is a sign factor which is $1$ when $f$ is ingoing at $v$, i.e.\ oriented consistently with the half edge $(ev)$, and $-1$ otherwise.
These equations are supplemented by the simplicity constraints
\be
\Xb^{\gamma}_{ef}(v) = \left(\gamma^{+} j_{f}\mathrm{Ad}(n_{ef}(v))\sigma_3\,,\, -\gamma^{-} j_{f}\mathrm{Ad}(u_{e}(v)n_{ef}(v))\sigma_3 \right)\,,
\label{Xintermsofjunagain}
\ee
where
\be
n_{ef}(v) \equiv g^+_{ve} n_{ef} \!\left(g^+_{ve}\right)^{-1}\,.
\ee
\renewcommand{\arraystretch}{1}

\subsection{Projection to SO(4)}

In order to solve these equations explicitly it is convenient to project them  from SU(2)$\times$SU(2) to SO(4) and work purely in terms of 
vectorial and SO(4) variables.

The action has the property that
\be
S^{\gamma}_{\Delta}(j_{f},u_{e},n_{ef}; -\Gb_{ef})
= S^{\gamma}_{\Delta}(j_{f},u_{e},n_{ef}; \Gb_{ef})+ 2i \pi (\gamma^{+}+\gamma^{-})j_{f}\,,
\ee
so the weight $\exp(S^{\gamma}_{\Delta})$ projects down to a function of SO(4) if one restricts to configurations for which 
$(\gamma^{+}+\gamma^{-})j_{f}$ is an integer. We assume from now on that this is the case. 

The projection to SO(4) means that we work with bivectors $X^{IJ}$ instead of pairs $(X^{+},X^{-})$, the relation between the two being 
\be
X^{\pm}_i = \frac{1}{2}\,\epsilon_{i}{}^{jk} X_{jk} \pm X_{0i}\,.
\label{3vectorsintermsofbivectors}
\ee
We also associate a unit vector $\hU_e$ in $\bR^4$ to each SU(2) element $u_e$, defined by the relation
\be
u_e = \hU^0_e \mathbbm{1} + \irm\hU^i_e\,\sigma_i\,,\quad \hU_e^2=1\,,
\label{relationuUhat}
\ee
where $\sigma_i$ are the Pauli matrices. 
To translate the simplicity constraints \eq{Xintermsofjunagain} to so(4), it is convenient to introduce a fiducial bivector field which is independent of $\gamma$ and which is simple unlike $X^{\gamma}$.
We denote this  bivector field by  $X_{ef}$ without any subscript $\gamma$ and it is defined by 
\be
X^{\gamma +}_{ef} = \gamma^+ X^+_{ef}\,,\qquad X^{\gamma -}_{ef} = -\gamma^- X^-_{ef }\,.
\label{XgammaX}
\ee
Due to the simplicity constraint \eq{Xintermsofjunagain}, $u_{e}X^{+}_{ef} + X^{-}_{ef}u_{e} =0$ and $|X_{ef}^{\pm}|=j_{f}$. By using the identity
\be
\frac{1}{2\irm}\left( uX^{+}+X^{-}u \right)
=(\star  X\cdot \hU)_{0} + \irm (\star  X\cdot \hU)_{i}\sigma^{i},\quad (X\cdot U)_{I}\equiv X_{IJ}U^{J},\quad  (\star  X)_{IJ}\equiv\frac12 \epsilon_{IJKL}X^{KL}.
\ee
we then find that
\be 
\label{simp1}
\fbox{\parbox{13cm}{$$
X^{\gamma}_{ef} = \frac{1}{2}(\gamma^{+}+ \gamma^{-}) X_{ef}
+ \frac{1}{2}(\gamma^{+}- \gamma^{-})(\star   X_{ef}),\quad \quad  (\star  X_{ef}(v)\cdot \hU_{e}(v))^{I} = 0
$$}}
\ee 
with $X_{ef}(v)\cdot X_{ef}(v) = 2 j_{f}^{2}$. This equation is the discrete version of the simplicity constraints in the continuum.
Recalling the definition of $\gamma$ in terms of $\gamma^{\pm}$, we can write the relation between 
$X^{\gamma} $ and $X$ also as 
\bea
X^\gamma_{ef} = \frac{1}{2}\left(\gamma^+ + \gamma^-\right)\left(\star X_{ef} + \frac{1}{\gamma} X_{ef}\right)\,,
\label{XX}
\eea
which shows that for $\gamma > 0$ $X$ plays the role of the dual of the area bivector:
\be
X_{ef } = \frac{1}{\gamma^{+}+\gamma^{-}}\frac{\star   A_{ef}}{8\pi\hbar G}\,.
\ee

\renewcommand{\arraystretch}{3}
Together with these simplicity conditions, we want to solve the equations \eq{eqsd}.
When written in terms of the $\gamma$--independent, simple bivector $X_{ef}$, they take the form
\be
\label{eqvect}
\fbox{\parbox{14cm}{
\[
\begin{array}{l@{\quad}l@{\quad}l}
G_{ef}(v)\triangleright X_{ef}(v) = X_{ef}(v)\,, 
& 
X_{ef}(v') = g_{v'v} \triangleright X_{ef}(v)\,,
&
\hU_{e}(v')= g_{v'v}\hU_{e}(v)\,, 
\\
X_{ef}(v) =  X_{e'f}(v)\,, & \ds\sum_{f\supset e} \epsilon_{ef}(v) X_{ef}(v) = 0\,. & \\
\end{array}
\]
}}
\ee
$\triangleright$ denotes the action of SO(4) generators on bivectors. This and eq.\ \eq{simp1} are the final form of the equations that we will study now.
\renewcommand{\arraystretch}{1}

\section{Discrete geometry}
\label{discretegeometry}

In order to find the general solution, we will assume that the bivectors $X_f(v)$ are non--degenerate: that is, 
\be
\label{nondeg2} 
X_{ef}(v)\wedge X_{e'f'}(v)\neq 0
\ee 
for any pair of faces $f,f'$ which do not share an edge. It turns out  that the solutions exist only if the set $(j_f)_f$ is Regge--like. That is, only if there is a discrete metric on the triangulation $\Delta$ for which $j_f$ is  the area of triangles dual to $f$. As we will see, the unit vectors $\hU^{I}$  are, on-shell, the normalized tetrad vectors associated with this metric and the connection $g_{v'v}$ is the discrete spin connection for this tetrad. In order to demonstrate these statements, we first need to define all these notions on the discrete complex\footnote{For previous definitions in the literature, see e.g.\ \cite{Frohlich,Caselleetc,Gionti}}.

\subsection{Co--tetrads and tetrads on a simplicial complex}
\label{cotetradandtetradsonasimplicialcomplex}

\begin{definition}
A co--tetrad $E$ on the simplicial complex $\Delta$ is an assignment of vectors $E_\ell(v)\in\bR^4$ to each vertex $v\subset\Delta^*$ and
oriented edge $\ell\subset\Delta$, $\ell\subset \sigma = v^*$, where the following properties hold:
\begin{itemize}
\item[(i)] $E_{-\ell} = -E_\ell$.
\item[(ii)] For any triangle $t\subset \sigma = v^*$, and edges $\ell_1, \ell_2, \ell_3\subset t$ s.t.\ $\partial t = \ell_1 + \ell_2 + \ell_3 $, the vectors $E_\ell(v)$ close, i.e.\ 
\be
E_{\ell_1}(v) + E_{\ell_2}(v) + E_{\ell_3}(v) = 0\,.
\label{cotetradclosure}
\ee
\item[(iii)] 
For every edge $e = v'v \subset\Delta^*$ and for any pair of edges $\ell_1$ and $\ell_2$ in the tetrahedron $\tau$ dual to $e$, we have
\be
E_{\ell_1}(v')\cdot E_{\ell_2}(v') = E_{\ell_1}(v)\cdot E_{\ell_2}(v)\,.
\label{cotetradmetricity}
\ee
\end{itemize}
\end{definition}
In other words, a co--tetrad $E$ is an assignment of a closed $\bR^4$--valued 1--chain $E(v)$ to each 4--simplex $\sigma = v^*$ that fulfills a compatibility criterion. In each 4--simplex $\sigma^* = v$, the co--tetrad vectors $E_l(v)$ define a flat Riemannian metric $g_v$ by
\be
g_{\ell_1\ell_2}(v) = E_{\ell_1}(v)\cdot E_{\ell_2}(v)\,,\quad \ell_1,\ell_2\subset\sigma\,.
\ee
Condition (iii) requires that, for any pair of 4--simplices $\sigma = v^*$ and $\sigma' = v'^*$ which share a tetrahedron $\tau$, the metric induced on $\tau$ by $E(v)$ and $E(v')$ are the same. Thus, the co--tetrad $E$ equips $\Delta$ with the structure of a piecewise flat Riemannian simplicial complex. 

We call a co--tetrad $E$ \textit{non--degenerate} if  at every vertex $v = \sigma^*\subset\Delta^*$ and for every tetrahedron $\tau\subset\sigma$, the span of the vectors $E_\ell(v)$, $\ell \subset\tau$, is 4--dimensional. 

\begin{proposition}
Given any non--degenerate co--tetrad $E$, there is a unique SO(4) connection $\Omega$ on $\Delta^*$ that satisfies the condition
\be
E_\ell(v') = \Omega_{v'v} E_\ell(v)\qquad \forall\; v'v = e\subset \Delta^*,\; \ell\subset e^*\,.
\label{conditionLeviCivitaconnection}
\ee 
We call this connection $\Omega$ the spin connection associated to $E$. 
\label{propositionLeviCivitaconnection}
\end{proposition}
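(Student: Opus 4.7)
The plan is to construct $\Omega_{v'v}$ independently for each dual edge $e = v'v \subset \Delta^*$, using only the co--tetrad data on the tetrahedron $\tau = e^*$ shared by the two 4--simplices $v^*$ and $(v')^*$. Uniqueness is immediate from the non--degeneracy hypothesis: the vectors $\{E_\ell(v)\}_{\ell \subset \tau}$ span $\bR^4$, so any SO(4) element is determined by its action on this set, and two connections satisfying \eq{conditionLeviCivitaconnection} must coincide.

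For existence, I first select four edges $\ell_1, \dots, \ell_4 \subset \tau$ such that $\{E_{\ell_i}(v)\}$ is a basis of $\bR^4$; this is possible by non--degeneracy. Condition (iii) asserts that the Gram matrix $G_{ij} \equiv E_{\ell_i}(v) \cdot E_{\ell_j}(v)$ coincides with $E_{\ell_i}(v') \cdot E_{\ell_j}(v')$, so $\{E_{\ell_i}(v')\}$ has the same non--singular Gram matrix and is likewise a basis. I then define $\Omega_{v'v}$ as the unique linear map determined by $\Omega_{v'v} E_{\ell_i}(v) = E_{\ell_i}(v')$. Because it preserves the Gram matrix of a basis, it preserves the inner product on all of $\bR^4$, so $\Omega_{v'v} \in O(4)$.

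Next I would verify that $\Omega_{v'v}$ satisfies \eq{conditionLeviCivitaconnection} on the remaining two edges $\ell \subset \tau$. Expanding $E_\ell(v) = \sum_i c_i^\ell E_{\ell_i}(v)$, the coefficients are determined by the linear system $G_{ij}\, c_j^\ell = E_\ell(v) \cdot E_{\ell_i}(v)$, whose right--hand side equals $E_\ell(v') \cdot E_{\ell_i}(v')$ by (iii). The same coefficients therefore express $E_\ell(v')$ in the basis $\{E_{\ell_i}(v')\}$, and linearity of $\Omega_{v'v}$ yields $\Omega_{v'v} E_\ell(v) = E_\ell(v')$ as required. The closure condition (ii) is not needed here, but is implicitly consistent since it relates edge vectors within each triangle of $\tau$ and is preserved by any linear map.

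The main obstacle is to upgrade $\Omega_{v'v}$ from $O(4)$ to $SO(4)$. Its determinant equals $\pm 1$, with the sign controlled by the relative orientation of the bases $\{E_{\ell_i}(v)\}$ and $\{E_{\ell_i}(v')\}$ viewed as ordered frames in $\bR^4$. The orientability of $\Delta$ assumed at the outset induces a coherent orientation of every tetrahedron $\tau$ from both adjacent 4--simplices; by ordering the basis edges $\ell_1,\dots,\ell_4$ compatibly with this orientation on either side, the two frames necessarily carry the same orientation in $\bR^4$, forcing $\det \Omega_{v'v} = +1$. This places $\Omega_{v'v}$ in $SO(4)$ and completes the construction of the spin connection.
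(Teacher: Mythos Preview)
Your argument has a genuine gap at the very first step of the existence proof. You claim that by non--degeneracy you can select four edges $\ell_1,\dots,\ell_4\subset\tau$ with $\{E_{\ell_i}(v)\}$ a basis of $\bR^4$. This is impossible: the triangle closure condition (ii), applied to the four faces of $\tau$, forces all six vectors $E_\ell(v)$, $\ell\subset\tau$, to lie in a \emph{three}--dimensional subspace. Indeed, if $p_1,\dots,p_4$ are the vertices of $\tau$, then every $E_{p_ip_j}(v)=E_{p_ip_1}(v)-E_{p_jp_1}(v)$ already lies in the span of $E_{p_2p_1}(v),E_{p_3p_1}(v),E_{p_4p_1}(v)$. (The paper's definition of non--degenerate literally says ``4--dimensional'', but given closure this must be read as ``maximal'', i.e.\ 3--dimensional; the paper's own proof uses exactly three edge vectors.) So your construction of $\Omega_{v'v}$ as the unique linear map between two 4--frames never gets off the ground, and the subsequent verification that the remaining edges are mapped correctly is moot.

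This also invalidates your uniqueness and SO(4) arguments as written. Uniqueness still holds, but for a different reason: an SO(4) element is determined by its restriction to a 3--dimensional subspace because $\det=1$ forces it to fix (rather than reflect) the orthogonal line. The paper exploits exactly this: it completes three independent edge vectors to a basis by adjoining a unit normal $\hU$, and \emph{chooses the sign} of $\hU(v')$ so that the two ordered bases $(E_{\ell_1},E_{\ell_2},E_{\ell_3},\hU)$ at $v$ and $v'$ have the same $\bR^4$--orientation; this is what forces $\det\Omega_{v'v}=+1$. Your appeal to the orientability of $\Delta$ cannot substitute for this step, because an orientation of the 3--dimensional tetrahedron $\tau$ does not by itself select an orientation of the ambient $\bR^4$---the missing transverse direction is precisely where the sign ambiguity lives.
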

\begin{proof}
Let $\hU(v)$ and $\hU(v')$ denote unit normal vectors to $E_{\ell_i}(v)$, $i = 1,2,3$, and $E_{\ell_i}(v')$, $i = 1,2,3$, respectively. 
Choose these unit normal vectors such that
\be
\sgn\det\left(E_{\ell_1}(v'),E_{\ell_2}(v'),E_{\ell_3}(v'),\hU(v')\right) = \sgn\det\left(E_{\ell_1}(v),E_{\ell_2}(v),E_{\ell_3}(v),\hU(v)\right)\,.
\label{determinantequal}
\ee
Since the tetrad is non--degenerate, we can find a matrix $\Omega_{v'v} \in\mathrm{GL(4)}$ for which
\be
\Omega_{v'v} E_{\ell_i}(v) = E_{\ell_i}(v')\,,\; i = 1,2,3,\qquad \Omega_{v'v} \hU(v) = \hU(v')\,.
\label{conditiononomega}
\ee
By condition \eq{cotetradmetricity}, this matrix must be orthogonal, i.e.\ $\Omega_{v'v}\in \mathrm{O(4)}$. 
From eq.\ \eq{determinantequal}, we also know that $\det \Omega = 1$, so $\Omega\in\mathrm{SO(4)}$.
Suppose now there were two matrices $\Omega_1, \Omega_2\in \mathrm{SO(4)}$ for which
\be
\Omega_1 E_{\ell_i}(v) = E_{\ell_i}(v')\,,\qquad \Omega_2 E_{\ell_i}(v) = E_{\ell_i}(v')\,,\quad i = 1,2,3\,.
\ee
This would imply that 
\be
\Omega_2^{-1} \Omega_1 E_{\ell_i}(v) = E_{\ell_i}(v')\,,\quad i = 1,2,3\,,
\ee
and hence $\Omega_2 = \Omega_1$. Therefore, the group element $\Omega_{v'v}$  is unique. \qed
\end{proof}
Together, the closure condition \eq{cotetradclosure} and equation \eq{conditionLeviCivitaconnection} can be regarded as a discrete analogue of the torsion equation $DE = 0$. 

In analogy to the continuum, we can define the concept of a tetrad. This definition makes heavy use of the duality between $\Delta$ and $\Delta^*$.
To describe the relation between tetrad and co--tetrad, it is, in fact, convenient to formulate everything in terms of the dual complex $\Delta^*$.
Each 4--simplex $\sigma$ is dual to a vertex $v$ of $\Delta^*$, i.e.\ $\sigma = v^*$. By deleting a vertex $p$ in $\sigma$, we obtain a tetrahedron $\tau$. This tetrahedron $\tau$ is, in turn, dual to an edge $e$. Thus, the choice of a 4--simplex $\sigma$ and a vertex $p\subset\sigma$ defines an edge $e$ at the dual vertex $v$. Conversely, a pair $(v,e)$ can be used to label a vertex $p$ of the triangulation.
Two different pairs $(v,e_{1})$ and $(v',e'_{1})$ correspond to the same vertex provided that 1) $(v,v') = e $ is an edge of $\Delta^{*}$ and that 
2) $(e_1, e,e'_1)$ are consecutive edges in the boundary of a face of $\Delta^{*}$.

Since vertices of the triangulation correspond to pairs $(v, e_{1})$, edges $\ell = [p_1p_2]\subset\sigma$ of $\Delta$ correspond to  triples $(v, e_{1},e_{2})$.
We can use this to translate the notation for the co--tetrad to the dual complex:  instead of denoting the co--tetrad by $E_\ell(v)$, we can write it as $E_{e_1e_2}(v)$.

 In this notation, the defining relations for the co--tetrad appear as follows:
\be
E_{ee'}(v) = -E_{e'e}(v)\,,\qquad E_{e_{1}e_{2}}(v) + E_{e_{2}e_{3}}(v) + E_{e_{3}e_{1}}(v) = 0\,.
\ee
Similarly, the equation for the spin connection becomes
\be
\Omega_e E_{e_1 e_2}(v) = E_{e'_1 e'_2}(v')\,,
\ee
where $e = (vv')$ is an edge of $\Delta^{*}$ and $(e_i, e'_i)_{i=1,2}$ are pairs of edges such that $(e_i, e, e'_i)$ are consecutive edges in the boundary of a face. Note that there are always four such pairs for a given edge $e$.

When stating relations between co--tetrad and tetrad, it is also convenient to define an orientation for each 4--simplex. 
By definition, a local orientation of $\Delta $ is a choice of $\mathbb{Z}_{2}$--ordering of vertices for each 4-simplex $\sigma$. Such an ordering is represented by tuples $[p_{1},\cdots, p_{5}]$. Two $\mathbb{Z}_{2}$--orderings that differ by an even permutation are by definition equivalent.
Two $\mathbb{Z}_{2}$--orderings that differ by an odd permutation are said to be opposite and we write $[p_{1},p_{2}\cdots, p_{5}] =  -[p_{2},p_{1}\cdots, p_{5}]$.
By duality it is clear that a local orientation is equivalent to a choice of 
 $\mathbb{Z}_{2}$--ordering $[e_{1},\cdots,e_{5}]$ of edges of $\Delta^{*}$ meeting at $v$. 
With this orientation we can also define a correspondence between edges $e_{1}$ and oriented tetrahedra $[p_{2},\cdots, p_{5}]$.

Given a choice of local orientation  of $\Delta$, one says that two neighboring 4--simplices $\sigma$, $\sigma'$ that share a tetrahedron $\tau$ are consistently  oriented, if the orientation of $\tau$ induced from $\sigma$ is \textit{opposite} to the one induced from $\sigma'$.
Namely, if $\sigma = [p_{0}, p_{1},\cdots p_{4}]$ and 
$\sigma'= -[p'_{0}, p_{1},\cdots p_{4}]$, they induce opposite orientations on the common tetrahedron $\tau=[ p_{1},\cdots p_{4}]$ and are therefore consistently oriented. The triangulation $\Delta $ is said to be \textit{orientable} when it is possible to choose the local orientations such that they are consistent for every pair of neighboring 4--simplices. Such a choice of consistent local orientations is called a global orientation.

From now on and in the rest of the paper we assume that we work with an orientable triangulation and with a fixed global orientation.
 
\renewcommand{\arraystretch}{1.8}
\begin{definition}
For a given non--degenerate co--tetrad $E$ on $\Delta$, the associated tetrad $U$ is an assignment of vectors $U_e(v)\in\bR^4$ to each vertex $v$ and
(unoriented) edge $e\supset v$ such that 
\be
U_{eI}(v) E_{e''e'}^I(v) = \delta_{e''e}- \delta_{e'e}
\label{orthogonalityrelationwithoutindices}
\ee
for all $e',e''\supset v$.
\end{definition}
These conditions specify the tetrad $U$ uniquely, as we show in appendix \ref{relationbetweencotetradandtetradina4simplex}. The orthogonality relation \eq{orthogonalityrelationwithoutindices} is the discrete counterpart of the equation $E^\mu{}_I E_\nu{}^I = \delta^\mu{}_\nu$ in the continuum.
\renewcommand{\arraystretch}{1}

Based on \eq{orthogonalityrelationwithoutindices}, we can derive a number of useful identities satisfied by a tetrad:
\begin{proposition}
\label{propositionEUmainpart}
At any vertex $v\subset\Delta^*$, the tetrad vectors $U_e(v)$ close, i.e.\ 
\be
\sum_{e\supset v} U_e(v) = 0\,.
\label{tetradclosurewithoutindices}
\ee
For a tuple $[e_1\ldots e_5]$ of edges at $v$, 
we can express the  discrete tetrad explicitly  in terms of the discrete  co--tetrad and vice versa:
\be
U_{e_2}(v) = \frac{1}{V_4(v)} \star\left(E_{e_3e_1}(v)\wedge E_{e_4e_1}(v)\wedge E_{e_5e_1}(v)\right)
\ee
and
\be
E_{e_2e_1}(v) = V_4(v) \star\left(U_{e_3}(v)\wedge U_{e_4}(v)\wedge U_{e_5}(v)\right)\,,
\label{inversionformula}
\ee
where $V_4(v)/4!$ is the oriented volume of the 4--simplex spanned by the co--tetrad vectors:
\be
V_4(v) = \det\left(E_{e_2e_1}(v),\ldots,E_{e_5e_1}(v)\right)\,.
\label{4volumeintermsofE}
\ee 
For bivectors, one has the relation
\be
\star\left(E_{e_1e_2}(v)\wedge E_{e_2e_3}(v)\right) = V_4(v)\left( U_{e_4}(v) \wedge U_{e_5}(v)\right)\,.
\ee
The norm of $U_e$ is proportional to the volume $V_3(e)/3!$ of the tetrahedron orthogonal to $U_e$:
\be
|U_{e}(v)| = \frac{V_3(e)}{|V_{4}(v)|}.
\ee
The determinant of the tetrad vectors equals the inverse of $V_4(v)$:
\be
\frac{1}{V_4(v)} = \det\left(U_{e_2}(v),\ldots,U_{e_5}(v)\right).
\ee
\end{proposition}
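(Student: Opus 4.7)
The key observation is that upon setting $e' = e_1$ for a distinguished edge $e_1$ at $v$, the defining relation $U_{eI}(v)\,E^I_{e''e'}(v) = \delta_{e''e} - \delta_{e'e}$ becomes
\[
U_{e_i}(v)\cdot E_{e_j e_1}(v) = \delta_{ij}\,,\qquad i,j\in\{2,3,4,5\}\,.
\]
By the co--tetrad closure condition, $E_{e_i e_j}(v) = E_{e_i e_1}(v) - E_{e_j e_1}(v)$, so the non--degeneracy hypothesis guarantees that $\{E_{e_j e_1}(v)\}_{j=2,\ldots,5}$ is a basis of $\bR^4$; the four vectors $\{U_{e_i}(v)\}_{i=2,\ldots,5}$ are then its reciprocal basis. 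From this point on the entire proposition reduces to standard reciprocal--basis identities in $\bR^4$.

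Applying Cramer's rule to the dual basis relation yields
\[
U_{e_2}(v) \;=\; \frac{1}{V_4(v)}\star\!\bigl(E_{e_3 e_1}(v)\wedge E_{e_4 e_1}(v)\wedge E_{e_5 e_1}(v)\bigr),
\]
where the normalization $V_4(v) = \det\bigl(E_{e_2 e_1}(v),\ldots,E_{e_5 e_1}(v)\bigr)$ is forced by the condition $U_{e_2}(v)\cdot E_{e_2 e_1}(v) = 1$; the determinant identity $\det(U_{e_2},\ldots,U_{e_5}) = 1/V_4(v)$ is then the familiar statement that reciprocal bases have reciprocal determinants. The inversion formula expressing $E_{e_2 e_1}(v)$ as a Hodge dual of tetrads follows by swapping the roles of $U$ and $E$ in the same argument. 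Summing the defining relation over all five edges $e\supset v$ gives $\bigl(\sum_{e\supset v} U_e(v)\bigr)\cdot E_{e''e'}(v) = 0$ for every pair $(e'',e')$, and since the $E_{e_j e_1}(v)$ span $\bR^4$, the closure $\sum_{e\supset v} U_e(v) = 0$ follows at once.

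For the bivector identity I would use the closure condition to rewrite $E_{e_1 e_2}\wedge E_{e_2 e_3} = E_{e_2 e_1}\wedge E_{e_3 e_1}$ and then apply the general reciprocal--basis identity $\star(v_i\wedge v_j) = \tfrac{1}{2}V_4\,\epsilon_{ijkl}(w_k\wedge w_l)$, valid for any basis $\{v_1,\ldots,v_4\}$ of $\bR^4$ with reciprocal basis $\{w_1,\ldots,w_4\}$, taking $v_i = E_{e_{i+1}e_1}$ and $w_i = U_{e_{i+1}}$. The norm formula follows from the explicit wedge--product expression for $U_{e_2}$ together with the geometric observation that the tetrahedron $\tau$ dual to $e_2$ has $p_1$ as a vertex and edges $E_{e_3 e_1}$, $E_{e_4 e_1}$, $E_{e_5 e_1}$ emanating from it, so that $|E_{e_3 e_1}\wedge E_{e_4 e_1}\wedge E_{e_5 e_1}| = V_3(e_2)$ and $|\star\omega| = |\omega|$ for 3--forms in four dimensions. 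The main obstacle is not conceptual but bookkeeping: tracking signs, orientation conventions for the Hodge star, and permutations of edge labels, and verifying that the formulas written for the particular edge $e_2$ extend consistently to every edge $e_i$ at $v$ once the global orientation $[e_1,\ldots,e_5]$ is fixed.
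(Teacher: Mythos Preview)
Your proposal is correct and follows essentially the same approach as the paper: both recognize the defining relation $U_{e_i}\cdot E_{e_j e_1}=\delta_{ij}$ as a dual--basis condition in $\bR^4$ and derive all the identities by contracting with the appropriate $E$'s (your ``Cramer's rule'' and ``reciprocal determinants'' are exactly what the paper calls ``proven by contracting'' and ``the multiplication rule for determinants''). The only minor difference is that you obtain the closure $\sum_e U_e(v)=0$ directly by summing the duality relation, whereas the paper deduces it from the explicit wedge--product formula for $U$; both arguments are valid.
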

In this proposition  we have set
$
[\star(E_{1}\wedge \cdots \wedge E_{n})]_{I_{1} \cdots I_{4-n}} 
\equiv \epsilon_{I_{1}\cdots I_{4}} E_{1}^{I_{5-n}}\cdots E_{n}^{I_4}\,.
$ 
These statements are proven in appendix \ref{relationbetweencotetradandtetradina4simplex}.

\section{Solutions}
\label{solutions}

With the help of the previous definitions, we will now determine the solutions to the equations \eq{eqvect}, the simplicity constraints (\ref{simp1}) and the non--degeneracy condition \eq{nondeg2}. For the proofs it is practical to denote the oriented wedge $(ef)$ by an ordered pair of edges $(ee')$ which meet at $v$.
The order $(ee')$ refers to the fact that $e$ and $e'$ are consecutive w.r.t.\ the orientation of the face.
Note that the interior closure constraints $X_{ef}(v) =  X_{e'f}(v)$ mean that there is only one bivector per face $f$ and vertex $v$. Hence we can denote the bivectors by  $ X_{ee'}(v)\equiv X_{ef}(v)=X_{e'f}(v)$.

\renewcommand{\arraystretch}{1.6}
\begin{proposition}
\label{mainproposition}
Let $(j_f,n_{ef},u_e,g_{ev},h_{ef})$ be a configuration that solves eqns. \eq{eqvect}, with the bivectors defined by the simplicity condition \eq{Xintermsofjunagain}. Then, there exists a co--tetrad $E$ such that for any vertex $v$ and tuple $[e_1\ldots e_5]$ of edges at $v$
\be
X_{e_4e_5}(v) = \epsilon\,\star\left(E_{e_1e_2}(v)\wedge E_{e_2e_3}(v)\right)\,.
\label{propositionXintermsofE}
\ee
The factor $\epsilon$ is a global sign, and the 4--volume $V_4(v)$ is given by equation \eq{4volumeintermsofE}.
This co--tetrad is unique up to inversions $E_\ell(v) \to -E_\ell(v)$, $\ell\subset v^*$.  

Equivalently, the bivectors can be expressed by the associated tetrad $U$, namely, 
\be
X_{ee'}(v) = \epsilon\, V_4(v)\left(U_e(v) \wedge U_{e'}(v)\right)
\ee
for any pair of edges $e, e'\supset v$. 

Given this co--tetrad $E$ and tetrad $U$, the variables $(j_f,n_{ef},u_e,g_{ev})$ are determined as follows: the spin $j_f$ is equal to the norm of the bivector $\star  X_{ef}(v)$, and hence Regge--like. 
The group elements $g_{v'v} = g_{v'\!e} g_{ev}$ are, up to signs $\epsilon_e$, equal to the spin connection for the co--tetrad $E$, i.e.\
\be
g_{v'\!v} = \epsilon_e\,\Omega_{v'\!v},\,\qquad\epsilon_e = \pm 1\,.
\label{propositiongintermsofOmega}
\ee
For a given choice of the holonomy $g_{ev}$ on the half--edge $ev$, the group element $u_e$ is determined, up to sign, by
\be
u_e = \frac{\pm 1}{|U_e(v)|} \left(\left(g_{ev}U_e(v)\right)^0 \mathbbm{1} + \irm\left(g_{ev}U_e(v)\right)^i\sigma_i\right)\,.
\label{propositionuintermsofU}
\ee
The group element $n_{ef}$ is fixed, up to a U(1) subgroup, by
\be
n_{ef}\sigma_{3}n_{ef}^{-1}= N_{ef}^{i}\sigma_{i}\,,
\label{propositionn}
\ee
where for $f = (ee')$
\be
j_f N^i_{ef} = V_4(v) \left(g_{ev}\triangleright U_e(v) \wedge U_{e'}(v)\right)^+{}^i\,.
\ee
Conversely, every non--degenerate co--tetrad $E$ and spin connection $\Omega$ give rise to a solution via the formulas \eq{propositionXintermsofE}, \eq{propositiongintermsofOmega}, \eq{propositionuintermsofU} and \eq{propositionn}.
\end{proposition}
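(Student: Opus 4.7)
The plan is to split the proof into a local (single $4$--simplex) part and a gluing part. Within a single vertex $v$, I would extract a closed tetrad $U_e(v)$ from the bivectors $X_{ee'}(v)$ and then, via the inversion formulas of proposition \ref{propositionEUmainpart}, the associated co--tetrad $E(v)$. The parallel transport relations of \eq{eqvect} then glue these local frames into a globally defined co--tetrad whose spin connection coincides with $g_{v'v}$ up to signs.

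For the local step, the simplicity constraint \eq{simp1}, together with the simplicity of $X_{ef}$, forces $\hU_e(v)$ to lie in the $2$--plane of the simple bivector $X_{ef}(v)$. Combined with the interior closure $X_{ef}(v)=X_{e'f}(v)$---so that the same $2$--plane contains \emph{both} $\hU_e(v)$ and $\hU_{e'}(v)$---this yields
\[
X_{ee'}(v)=\lambda_{ee'}(v)\,\hU_e(v)\wedge \hU_{e'}(v)
\]
for some scalars $\lambda_{ee'}=\lambda_{e'e}$. The exterior closure $\sum_{f\supset e}\epsilon_{ef}(v)X_{ef}(v)=0$ then reads $\hU_e(v)\wedge\bigl(\sum_{e'\neq e}\epsilon_{ef}(v)\lambda_{ee'}(v)\hU_{e'}(v)\bigr)=0$, so each such inner sum is parallel to $\hU_e(v)$. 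Equivalently, the five vectors $\hU_{e_i}(v)$, $i=1,\dots,5$, satisfy a linear relation $\sum_i\mu_i(v)\hU_{e_i}(v)=0$ with $\epsilon_{e_ie_j}(v)\lambda_{e_ie_j}(v)\propto\mu_j(v)$; the non--degeneracy \eq{nondeg2} guarantees that any four of the $\hU_{e_i}(v)$ are linearly independent, so this relation is unique up to overall scale and the $\mu_i(v)$ are all non--zero. Fixing that scale defines $U_{e_i}(v)\equiv \mu_i(v)\hU_{e_i}(v)$ with $\sum_{e\supset v}U_e(v)=0$ and yields $X_{ee'}(v)=\epsilon V_4(v)\,U_e(v)\wedge U_{e'}(v)$ for a global sign $\epsilon$. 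The co--tetrad $E$ at $v$ is then defined by the inversion formula \eq{inversionformula}, and conditions (i)--(ii) reduce to algebraic identities for $U$ which follow from closure.

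To glue vertices, consider an edge $e=(v'v)$ and its dual tetrahedron $\tau=e^*$. The transport equations $X_{ef}(v')=g_{v'v}\triangleright X_{ef}(v)$ and $\hU_e(v')=g_{v'v}\hU_e(v)$, combined with the local reconstruction, imply that $U_{e'}(v')=\epsilon_e\, g_{v'v}\triangleright U_{e'}(v)$ for every edge $e'$ of $\tau$, where $\epsilon_e=\pm 1$ is a $\bZ_2$ ambiguity inherited from $\mu\leftrightarrow -\mu$. By the inversion formulas of proposition \ref{propositionEUmainpart}, this gives $E_\ell(v')=\epsilon_e\,g_{v'v}E_\ell(v)$ for $\ell\subset\tau$; the metricity condition (iii) of the co--tetrad definition is then immediate, and $E$ is a bona fide global co--tetrad. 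Proposition \ref{propositionLeviCivitaconnection} then identifies $\epsilon_e^{-1}g_{v'v}$ with the unique spin connection $\Omega_{v'v}$ of $E$, proving \eq{propositiongintermsofOmega}. The spin is $j_f=|X_{ef}|=V_4(v)\,|U_e(v)\wedge U_{e'}(v)|$, which is the area of the triangle dual to $f$ computed from the edge lengths $l_\ell=|E_\ell|$, so the $j_f$ are Regge--like.

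Finally, \eq{propositionuintermsofU} follows by identifying $u_e(v)$ with $\hU_e(v)$ through \eq{relationuUhat} and transporting back along $e$ via $g_{ev}$, while \eq{propositionn} is read off from \eq{Xintermsofjunagain} after substituting the derived expression $X^+_{ee'}(v)=\epsilon V_4(v)(U_e(v)\wedge U_{e'}(v))^+$, which fixes the adjoint action $n_{ef}\sigma_3 n_{ef}^{-1}$ and thus determines $n_{ef}$ modulo its U(1) stabilizer. The converse direction is a direct verification: starting from any non--degenerate $E$ and its $\Omega$, the four formulas manifestly satisfy simplicity, the maximality condition \eq{max} (since the constructed $X_{ef}$ commute with $G_{ef}=\mathbbm{1}$ up to gauge), the parallel transport equations (by construction of $\Omega$), and the closure constraints of \eq{eqvect} (from $\sum_{e\supset v}U_e(v)=0$). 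The main obstacle I anticipate is the single--vertex reconstruction: showing that ten simple bivectors in a $4$--simplex obeying exterior closure and non--degeneracy necessarily come from five closed tetrad vectors whose pairwise wedges reproduce all ten bivectors \emph{simultaneously and consistently}, and fixing the residual sign ambiguities into the global $\epsilon$ and edge--signs $\epsilon_e$ in a coherent way.
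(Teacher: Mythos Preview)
Your overall strategy matches the paper's: reconstruct a tetrad at each vertex from the factorization $X_{ee'}=\alpha_{ee'}\,\hU_e\wedge\hU_{e'}$ plus closure, invert to a co--tetrad via proposition~\ref{propositionEUmainpart}, then glue using the parallel transport relations. The local step, including the factorization $\alpha_{ee'}\propto\mu_e\mu_{e'}$ via uniqueness of the linear dependence among the $\hU_{e_i}$, is essentially the paper's argument (carried out in appendix~\ref{reconstructionof4simplex}). However, the gluing step contains a genuine error, and there is a related gap concerning the globality of~$\epsilon$.

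You assert that the transport equations imply $U_{e'}(v')=\epsilon_e\, g_{v'v} U_{e'}(v)$ ``for every edge $e'$ of $\tau$''. This is false. For $e=(vv')$ with $\tau=e^*$, the dual edges $e_i\supset v$ and $e'_i\supset v'$ (those forming faces with $e$) are dual to \emph{different} tetrahedra $\tau_i\subset v^*$ and $\tau'_i\subset (v')^*$, which share only a triangle; their normals are not related by parallel transport alone. What the bivector transport $g_{vv'}\triangleright X_{ee'_i}(v')=X_{ee_i}(v)$ actually gives, combined with $g_{vv'}\hU_e(v')\propto\hU_e(v)$, is only that $g_{vv'}U_{e'_i}(v')$ lies in the plane spanned by $U_e(v)$ and $U_{e_i}(v)$: one finds $g_{vv'}U_{e'_i}(v')=\alpha\,U_{e_i}(v)+a_i\,U_e(v)$ with a priori unknown $\alpha,a_i$. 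A further computation (appendix~\ref{reconstructionofcotetradandtetrad} in the paper) is needed---evaluating the determinant of the $U'_i$ and then applying the inversion formula \eq{inversionformula}---to show that nevertheless the \emph{co--tetrad} vectors satisfy $g_{vv'}E_\ell(v')=\pm E_\ell(v)$ for $\ell\subset\tau$. Only then do you obtain metricity (iii) and the identification $g_{v'v}=\epsilon_e\Omega_{v'v}$. Your shortcut through the tetrads does not work.

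The same computation is what closes your second gap: in the local step $\epsilon$ is only a vertex--dependent sign $\epsilon(v)$. The determinant identity derived from $U'_i=\alpha U_i+a_iU_0$ forces $\epsilon(v)=\epsilon(v')$ for neighbouring vertices, making $\epsilon$ global; you have simply asserted this without argument.
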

\begin{proof}
Let us first consider two consecutive edges $e$ and $e'$ such that $f = (ee')$. 
The simplicity condition $(\star  X_{ef}(v))\cdot \hat{U}_e(v)=0$ implies that there exists a 4--vector $N_{ef}(v)$ such that
$X_{ef}(v) = \hat{U}_e(v)\wedge N_{ef}(v)$.  Similarly there exists another vector $N_{e'f}$ such that
$X_{e'f}(v) = \hat{U}_{e'}(v)\wedge N_{e'f}(v)$.
The interior closure constraint $X_{ef}(v)=X_{e'f}(v)\equiv X_f(v)$ requires that $U_{e'}(v)$ belongs to the plane spanned by $U_e(v)$ and $N_{ef}(v)$, so there exist coefficients $a_{ef}, b_{ef}$ such that 
\be
\hat{U}_{e'}(v) = a_{ef} N_{ef}(v) + b_{ef}\hat{U}_e(v)\,.
\ee
If $a_{ef}=0$, this means that $\hat{U}_e(v) = \hat{U}_{e'}(v)$, since $\hat{U}$ are normalized vectors. This is excluded by our condition of non--degeneracy, since one would have $X_{ef_2}(v)\wedge X_{e'\!f'_2}(v)=0$ if $f_2=(ee_2)$ and $f_2'=(e'e'_2)$. Denoting $\alpha_{ee'}\equiv a_{ef}^{-1}$ one therefore has 
$N_{ef} = \alpha_{ee'} \hat{U}_{e'} - \alpha_{ee'} b_{ef}\hat{U}_e$ and hence
\be
X_{ee'}(v) = \alpha_{ee'}(v) \left(\hat{U}_e(v)\wedge\hat{U}_{e'}(v)\right)\,.
\label{beforefactorization}
\ee
It follows from this expression and the non--degeneracy condition \eq{nondeg2} that the vectors $\hat{U}_e(v)$, $e\supset v$, span a 4--dimensional space.
As shown in appendix \ref{reconstructionof4simplex}, the exterior closure constraints 
\be
\sum_{f\supset e} \epsilon_{ef}(v) X_{f}(v) = 0\,,
\ee
imply the factorization $\alpha_{ee'} = \epsilon(v) \alpha_e(v)\alpha_{e'}(v)$, where $\epsilon(v) = \pm 1$ and $\alpha_e(v)$ are real numbers, independent of the orientation of $e$, such that 
\be
\sum_{e \supset v} \alpha_e(v) \hat{U}_{e}(v) = 0\qquad\mbox{and}\qquad
j_f^2 = \alpha_e^2(v) \alpha_{e'}^2(v) \sin^2 \theta_{ee'}(v)\,.
\label{condalpha}
\ee
The angle $\theta_{ee'}(v)$ is defined by $\cos\theta_{ee'}(v) = \hat{U}_{e}(v) \cdot \hat{U}_{e'}(v)$. These conditions only admit a solution if there exists a discrete, geometrical 4--simplex (i.e.\ a set of edge lengths $\ell(v)$) such that $j_f$ and $\theta_{ee'}$ are the areas and dihedral angles in this 4--simplex.
In this case, $|\alpha_{e}(v)|$ is uniquely determined by the spins $j_f$ and the unit vectors $\hat{U}_e(v)$. The $\alpha_e(v)$ themselves are only fixed up to an overall sign, i.e.\ if $\alpha_e$, $e\supset v$, solves \eq{condalpha}, then $\epsilon_{v} \alpha_e(v)$, $\epsilon_{v}=\pm1$, is a solution as well. 

Given the ordering $e_1,\ldots, e_5$ of edges at $v$, we then define
\be
V_4(v) \equiv \det\left(\alpha_{e_2}\hat{U}_{e_2}(v),\ldots,\alpha_{e_5} \hat{U}_{e_5}(v)\right)\qquad\mbox{and}\qquad
U_e(v) \equiv \frac{\alpha_e(v)}{\sqrt{|V_{4}|}} \hat{U}_e(v)\,.
\ee
These vectors have the property that
\be
\sum_{e\supset v}  U_{e}(v) = 0\qquad\mbox{and}\qquad 
X_{ee'}(v) = \epsilon(v) V_{4}(v) \left(U_{e}(v)\wedge U_{e'}(v)\right)\,,
\label{proofXintermsofU}
\ee
where $\epsilon(v) = \pm 1$. Thus, the $U_{e}(v)$ define tetrad vectors for the 4--simplex dual to $v$, and we can use formula \eq{inversionformula} to specify corresponding co--tetrad vectors $E_\ell(v)$.

Next we need to analyze the equations that relate neighboring 4--simplices $v$ and $v'$, connected by the edge $e = (vv')$:
\be
g_{vv'} \hat{U}_{e}(v') = \hat{U}_{e}(v)\,,\qquad g_{vv'}\triangleright X_f(v') = X_f(v)\,.
\label{proofrelationbetweenvertices}
\ee
The first condition leads to $g_{vv'} U_e(v')/|U_e(v')| = \tilde{\epsilon}_e U_e(v)/|U_e(v)|$, where $\tilde{\epsilon}_e \equiv \sgn\,\alpha_e(v)\,\sgn\,\alpha_e(v') = \pm 1$. By combining this with the second condition we find that for every edge $\ell$ of the tetrahedron dual to $e=(vv')$ (see appendix \ref{reconstructionofcotetradandtetrad})
\be
g_{vv'} E_{\ell}(v') = \epsilon_e E_{\ell}(v)\,,
\label{gEE}
\ee
with the sign $\epsilon_e \equiv \tilde{\epsilon}_e\,\sgn(V_4(v) V_4(v')) = \pm 1$. We see therefore that the vectors $E_\ell(v)$ satisfy the compatibility condition
(iii) in the definition of a co--tetrad, and hence they specify a co--tetrad on the entire simplicial complex.
Equation \eq{gEE} shows furthermore that $g_{vv'}$ is, up to the sign $\epsilon_e$, equal to the spin connection $\Omega_{vv'}$ associated with the co--tetrad $E$:
\be
g_{vv'} = \epsilon_e\Omega_{vv'}\,.
\label{proofgOmega}
\ee
In appendix \ref{reconstructionofcotetradandtetrad}, we also derive that the signs $\epsilon(v)$ in eq.\ \eq{proofXintermsofU} are constant, i.e.\ $\epsilon(v) = \epsilon(v')$ for neighbouring vertices $v$ and $v'$. 

The aforementioned ambiguity in the factors $\alpha_e(v)$ is transported into the co--tetrad and tetrad: for a given solution $(j_f,n_{ef},u_e,g_{ev},h_{ef})$, the tetrad and co--tetrad are fixed up to a reversal of edges in the geometrical 4--simplices: i.e.\ up to recplacing $(E_\ell(v),U_e(v)) \to (-E_\ell(v),-U_e(v))$ for all $\ell\subset v^*$, $e\supset v$.

If we start, conversely, from a co--tetrad $E$ and its tetrad $U$, it is clear that equations \eq{proofXintermsofU} and \eq{proofgOmega} define bivectors and connections which solve the equations \eq{eqvect}. The associated spins $j_f$ and group variables $u_e$ and $n_{ef}$ follow directly from the definitions \eq{Xintermsofjunagain} and \eq{relationuUhat}. \qed 
\end{proof}

\subsection{Determination of $h$}

So far we have determined $(j_f,u_e,n_{ef},g_{ev})$ in terms of a co--tetrad $E$ and signs $\epsilon$ and $\epsilon_e \equiv \e^{\irm\pi n_e}$. In order to complete the characterization of the solution, we also need to determine $h_{ef}$. This is done in the following
\begin{proposition}
For a non--degenerate co--tetrad $E$ and a choice of global sign $\epsilon$ and edge signs $\epsilon_e$, the holonomy of $g_e = \epsilon_e\Omega_e$ around a face $f$ with starting point $v$ has the form
\be
G_f(v) = \e^{\epsilon\Theta_{f} \hat{X}_{f}(v)} \e^{\pi(\sum_{e\subset f} n_{e})\star  \hat{X}_{f}(v)}\,,
\ee
where the bivector $X_f(v)$ is determined by $E$ as in equation \eq{propositionXintermsofE} and $\hat{X}_{f}= X_f/|X_f|$.
In this equation the bivector is treated as an antisymmetric map acting on $\mathbb{R}^{4}$ and we take an exponential of this map.

In the corresponding solution $(j_f,u_e,n_{ef},g_{ev},h_{ef})$ of eqns.\ \eq{eqvect}, the group elements $h_{ef}$ are uniquely determined, up to gauge transformations, by a choice of angles $(\theta_{ef}, \tilde{\theta}_{ef})$ for each wedge: these angles are subject to the conditions
\be
\sum_{e\subset f} \theta_{ef} = \epsilon\Theta_f\,,\qquad
\sum_{e\subset f}\tilde{\theta}_{ef} = \pi \sum_{e\subset f} n_e\,.
\ee
where $\Theta_{f}$ is the deficit angle of the spin connection.
The associated wedge holonomies equal
\be
G_{ef}(v) = \e^{\theta_{ef}\hat{X}_{f}(v)} \e^{\tilde{\theta}_{ef} \star  \hat{X}_{f}(v)}
\ee
\end{proposition}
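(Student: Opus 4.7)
The plan is to establish the two parts of the proposition in sequence: first I would compute the total face holonomy $G_f(v)$ of the connection $g_e = \epsilon_e\Omega_e$ directly from the co-tetrad geometry, then use the maximality condition \eq{max} to factorize each wedge holonomy and pin down $h_{ef}$ up to a residual U(1).

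\emph{First stage: the face holonomy.} By proposition \ref{propositionLeviCivitaconnection}, the spin connection $\Omega_e$ parallel transports every co-tetrad vector of the tetrahedron dual to $e$. Since the three edges $\ell_1,\ell_2,\ell_3$ of the triangle dual to $f$ are shared by all tetrahedra along $\pa f$, the loop holonomy $\Omega_f(v)=\prod_{e\subset f}\Omega_e$ fixes each $E_{\ell_i}(v)$ pointwise, so it acts trivially on the 2-plane of the triangle and is a pure rotation in the orthogonal 2-plane. By \eq{propositionXintermsofE} this orthogonal plane is precisely the plane of $\hat{X}_f(v)$, and by the very definition of discrete Regge curvature its rotation angle equals the deficit angle $\Theta_f$, carrying the global orientation sign $\epsilon$ inherited from \eq{propositionXintermsofE}. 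Hence
\be
\Omega_f(v) \;=\; e^{\epsilon\Theta_f\,\hat{X}_f(v)}.
\ee
Working in $\mathrm{Spin}(4)=\mathrm{SU(2)}\times\mathrm{SU(2)}$, the central element $\epsilon_e=\pm\mathbbm{1}$ can be written as $e^{\pi n_e\,\star\hat{X}_f(v)}$, because $e^{\pi\star\hat{X}_f(v)}$ coincides with the non-trivial central element of $\mathrm{Spin}(4)$ for any unit simple bivector, independently of the choice of face $f$ containing $e$. Collecting the signs then yields the claimed face-holonomy formula.

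\emph{Second stage: wedge factorization.} The maximality \eq{max} combined with the interior closure $\hat{X}_{ef}(v)=\hat{X}_f(v)$ forces each wedge holonomy to commute with $\hat{X}_f(v)$, so it lies in the abelian 2-torus generated by $\hat{X}_f(v)$ and $\star\hat{X}_f(v)$. This gives the stated factorization $G_{ef}(v)=e^{\theta_{ef}\hat{X}_f(v)}\,e^{\tilde\theta_{ef}\star\hat{X}_f(v)}$. The interior $h$-segments in consecutive wedge boundaries cancel in pairs, so the ordered product $\prod_{e\subset f}G_{ef}(v)$ around $\pa f$ reproduces $G_f(v)$; since the stabilizer is abelian, the exponents simply add, and matching with the first stage produces the two sum conditions $\sum_{e\subset f}\theta_{ef}=\epsilon\Theta_f$ and $\sum_{e\subset f}\tilde\theta_{ef}=\pi\sum_{e\subset f}n_e$. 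Finally, once these angles and the $g_{ev}$'s are fixed, the wedge defining equation $G_{ef}=g_{ev}g_{ve'}h_{e'f}h_{fe}$ determines $(h_{ef},h_{fe})$ uniquely modulo the stabilizer of $\hat{X}_f$, i.e.\ the face gauge $\lambdab_f$ in \eq{gaugesymmetry}, which is a U(1).

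\emph{Main obstacle.} The most delicate step is the geometric identification in the first stage: verifying that the rotation produced by parallel transport of the Levi--Civita spin connection around the hinge dual to $f$ is exactly the Regge deficit angle $\Theta_f$, with the correct orientation sign $\epsilon$. Establishing this requires a careful 4-simplex-by-4-simplex computation, using the setup of section \ref{cotetradandtetradsonasimplicialcomplex} to track how the edge-by-edge action of $\Omega_{vv'}$ on the co-tetrad accumulates into the extrinsic angle by which neighbouring 4-simplices fail to close flatly around the hinge. A secondary subtlety is the Spin(4)/SO(4) bookkeeping needed to identify $\prod_e\epsilon_e$ with $e^{\pi(\sum_e n_e)\star\hat{X}_f(v)}$; this must be done consistently with the conventions already used for the simplicity bivectors in \eq{simp1}.
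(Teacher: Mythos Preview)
Your overall strategy matches the paper's: derive $G_f(v)$ from the spin connection acting on the triangle's co--tetrad vectors, factorize each wedge holonomy in the abelian stabilizer of $\hat X_f$, and compare. However, three of your technical steps are inaccurate and would not go through as written.

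First, $e^{\pi\star\hat X_f}$ is \emph{not} a central element of $\mathrm{Spin}(4)$, nor is it $-\mathbbm{1}$ in $\mathrm{SO}(4)$: it is a half--turn in the triangle plane only and acts trivially on the orthogonal plane (in $\mathrm{SU}(2)\times\mathrm{SU}(2)$ it is $(\irm\hat X_f^+,-\irm\hat X_f^-)$, which is not central). The paper never writes $\epsilon_e$ as such an exponential; instead it computes directly that $G_f(v)E_{\ell_i}(v) = (\prod_e\epsilon_e)E_{\ell_i}(v)$, which fixes the $\star\hat X_f$--component of $G_f(v)$ to be $e^{\pi(\sum n_e)\star\hat X_f}$, and the residual rotation in the $\hat X_f$--plane is then \emph{identified} with $\epsilon\Theta_f$.

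Second, the telescoping you invoke fails for wedge holonomies based at a vertex: consecutive $G_{ef}(v)$ are separated by boundary parallel transport and the interior $h$--segments do not cancel. The paper's device is to rebase everything at the face centre, $G_{ef}(f) \equiv h_{fe}G_{ef}h_{ef}$; in that frame the ordered product $G_{e_1f}(f)\cdots G_{e_nf}(f)$ genuinely telescopes to a conjugate of $G_f(v)$, and only then can the abelian exponents be summed.

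Third, once the wedge angles are fixed, the residual freedom in the $h_{ef}$ is the full $\mathrm{Spin}(4)$ face gauge $\lambdab_f$, not a $\mathrm{U}(1)$. The paper establishes this by solving $h_{i+1} = g_{e_{i+1}v_i}G_i g_{v_ie_i}h_i$ recursively from an arbitrary seed $h_f$ and verifying that the loop closes; this recursion is also what actually demonstrates \emph{existence} of $h_{ef}$ for any admissible choice of angles, a point your proposal asserts rather than proves.
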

\begin{proof}
In order to do the analysis it is convenient to change the frame and base all our quantities at the center of the face $f$ (see \fig{choiceofinteriorconnection}). That is, we define
\be
X_{ef}(f) \equiv h_{fe}\triangleright X_{ef}\,,\qquad
G_{ef}(f) \equiv h_{fe}G_{ef} h_{ef} = h_{fe}g_{ev}g_{ve'}h_{e'f}\,.
\ee
It follows from the equations \eq{eqvect} that $X_{ef}(f) \equiv X_{f}(f)$ is independent of the wedge and that $G_{ef}(f) \triangleright X_{f}(f) = X_{f}(f)$.
The latter implies that
\be
G_{ef}(f) = \e^{\theta_{ef}\hat{X}_{f}(f)} \e^{\tilde{\theta}_{ef} \star  \hat{X}_{f}(f)}\,,
\ee
where $\hat{X}_f(f)= X_f(f)/|X_f(f)|$. The angles $\theta_{ef}$ and $\tilde{\theta}_{ef}$ have to satisfy constraints, as we will show now.

First we remark that the holonomy around the face $f$ can be written as a product of wedge holonomies
\be
\label{holwedge}
G_f(f) \equiv G_{e_1f}(f)\cdots  G_{e_nf}(f) = \left(h_{fe_{1}}g_{e_{1}v}\right) G_f(v) \left(h_{fe_1}g_{e_1v}\right)^{-1}\,.
\ee
On the right--hand side $G_f(v)$ is the face holonomy based at the vertex $v$.
We have seen that the connection of a solution satisfies $g_e = \epsilon_e \Omega_e$, where $\Omega$ is the spin connection and $\epsilon_e$ is an arbitrary sign. The defining property of the spin connection is that $\Omega_{vv'} E_{\ell}(v') = E_{\ell} (v)$ for all edges $\ell$ in the tetrahedron dual to $e = (vv')$. 

\psfrag{f}{$f$}
\psfrag{e1}{$e_1$}
\psfrag{e2}{$e_2$}
\psfrag{e3}{$e_3$}
\psfrag{v1}{$v_1$}
\psfrag{v2}{$v_2$}
\pic{choiceofinteriorconnection}{The wedge holonomies $G_{e_if}(f)$ have their starting and end point at the center of the face.}{3cm}{choiceofinteriorconnection}

As a result, the holonomy around the face $f$ preserves the co--tetrads associated with the triangle dual to $f$. More precisely, let us suppose that $\pa f^* = \ell_1 + \ell_2 + \ell_3$. Then, the on-shell holonomy fulfills
\be
\label{hol}
G_f(v)E_{\ell_i}(v)= \left(\prod_{e\subset f}\epsilon_{e}\right) E_{\ell_i}(v)\,,\quad i = 1,2,3\,.
\ee
As we have shown earlier, the bivector $X_f(v)$ is on-shell given by 
\be
X_f(v) = \epsilon \star \left(E_{\ell_1}(v) \wedge E_{\ell_{2}}(v)\right)\,.
\ee
Hence the condition \eq{hol} can be equivalently expressed by
\be
G_f(v) = \e^{\epsilon \Theta_{f} \hat{X}_{f}(v)} e^{\pi(\sum_{e\subset f}n_{e})  \star  \hat{X}_{f}(v)}
\ee
where $\epsilon_e \equiv \e^{\irm \pi n_{e}}$ and $\hat{X}_{f}(v)= {X}_{f}(v)/|{X}_{f}(v)|$. The angle $\Theta_f$ is the deficit angle of the spin connection w.r.t.\ the face $f$. 
Combining this result with eq.\ \eq{holwedge} one obtains that 
\be
\label{condtheta}
\sum_{e\subset f} \theta_{ef} = \epsilon \Theta_f  \,,\qquad
\sum_{e\subset f} \tilde{\theta}_{ef} = \pi \sum_{e\subset f} n_e\,. 
\ee

For a given tetrad $U$, the associated spin connection $\Omega$, and a choice of signs $\epsilon_e$, the angles $\theta_{ef}$ and $\tilde{\theta}_{ef}$ have to meet the constraint \eq{condtheta}. Once such angles $(\theta_{ef},\tilde{\theta}_{ef})$ are selected, we can solve for $h_{ef}$ recursively. For this, let us set $h_i \equiv h_{e_if}$ and define 
\be 
G_i(\theta_{e_if},\tilde{\theta}_{e_if}) \equiv \e^{\theta_{e_if}X_{f}(v_i)} \e^{\tilde{\theta}_{e_if} \star   X_{f}(v_i)}\,.
\ee
The equations
\be
G_{e_if}(v_i) = G_i(\theta_{e_if},\tilde{\theta}_{e_if})
\ee
can be recursively solved by setting
\be
h_{i+1} = g_{e_{i+1}v_i}G_i g_{v_ie_i}h_i\,,\qquad  h_1 = h_f\,,
\ee
where $h_f$ is an arbitrary initial value. This solution is consistent, since 
\bea 
\lefteqn{h_1 \equiv h_{n+1} 
=
g_{e_{1}v_{n}}G_{n}g_{v_{n}v_{n-1}}G_{n-1}\cdots g_{v_{2}v_{1}}G_{1}g_{v_1e_1} h_1} \\
&&= 
g_{e_{1}v_{n}}G_{n}(g_{v_{n}v_{n-1}}G_{n-1}g_{v_{n-1}v_{n}})\cdots(g_{v_{n}v_{n-1}}\cdots g_{v_2v_1} G_{1} g_{v_1v_2} \cdots g_{v_{n-1}v_{n}}) G^{-1}_f(v_{n}) g_{v_{n}v_1}g_{v_1e_1} h_1 \nonumber \\
&&= g_{e_{1}v_{n}} G_f(v_{n})G^{-1}_f(v_{n}) g_{v_ne_1} h_1 = h_1\,.
\eea
In the third equality, we used that $X_{ef}(v') = g_{v'\!v} X_{ef}(v) g^{-1}_{v'\!v}$.
Note that the group element $h_f$ can be fixed to the identity by a gauge transformation at the face center. This shows that, up to gauge, the elements $h_{ef}$ are determined by the choice of the angles $\theta_{ef},\tilde{\theta}_{ef}$.
\qed 
\end{proof}

\section{Semiclassical approximation of effective amplitude}
\label{semiclassicalapproximationofeffectiveamplitude}

\subsection{Evaluation of action}

In the previous section, we have seen that solutions of the equations (\ref{simp1},\ref{eqvect},\ref{nondeg2})
exist only if the set $j_{f}$ is Regge--like and, up to gauge transformation, they are uniquely determined by a 
choice of a discrete metric (coming from a co--tetrad $E$), of a global sign $\epsilon$, of edge signs $\epsilon_e$ and a choice of U(1) wedge angles $(\theta_{ef},\tilde{\theta}_{ef})$ subject to \eq{condtheta}.

\renewcommand{\arraystretch}{2}
\begin{proposition}
Given a solution characterized by the data $(U_{e},\epsilon,\epsilon_{e},\theta_{ef},\tilde{\theta}_{ef})$, 
the on-shell action is independent of $(\theta_{ef},\tilde{\theta}_{ef})$ and given by 
\be
\e^{S^{\gamma}(U_{e},\epsilon,\epsilon_{e})} = \left\{
\begin{array}{r@{\quad}l}
\e^{{{\irm \epsilon}} \sum_f  \clA_f\Theta_f} \prod_e \epsilon_e^{J_e}\,, & \gamma > 0\,, \\
\prod_e \epsilon_e^{J_e}\,, & \gamma = 0\,,
\end{array}
\right.
\label{evaluationaction}
\ee
where $\clA_f = (\gamma^+ + \gamma^-) j_f$ is the area of $f$ in Planck units $( 8\pi\hbar G=1 )$ for $\gamma > 0$ (see eq.\ \eq{areaintermsofspins}), $\Theta_f$ is the deficit angle of the spin connection, and $J_e \equiv (\gamma^{+}-\gamma^{-}) \sum_{f\supset e} j_{f}$.
\end{proposition}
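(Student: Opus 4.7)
First I would use the maximality condition \eq{max}: since $[G^\pm_{ef},\hat X^\pm_{ef}]=0$, the SU(2) factor $G^\pm_{ef}$ lies in the abelian subgroup generated by $\hat X^\pm_{ef}$ and can be written as $\cos\Phi^\pm_{ef}\,\mathbbm{1}+\irm\sin\Phi^\pm_{ef}\,\hat X^\pm_{ef}$. Substituting into $S(X;G)$ and using $\tr\bigl[\tfrac{1}{2}(\mathbbm{1}+\hat X^\pm_{ef})G^\pm_{ef}\bigr]=\e^{\irm\Phi^\pm_{ef}}$ gives $S(X^{\gamma\pm}_{ef};G^\pm_{ef})=2\irm\,j^{\gamma\pm}_f\,\Phi^\pm_{ef}$, in agreement with $\Re S=0$.

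The key observation is then that, by the interior closure constraint, all the wedge bivectors around a face $f$ coincide with a single $\hat X_f(v)$ after parallel transport to a common base vertex $v$. Consequently the SU(2) wedge holonomies $G^\pm_{ef}$, $e\subset f$, all commute, so that their SU(2) angles are additive: $\sum_{e\subset f}\Phi^\pm_{ef}=\Phi^\pm_f$ modulo $2\pi$, where $\Phi^\pm_f$ is the SU(2) angle of the full face holonomy $G^\pm_f$. This is precisely what makes the free wedge parameters $\theta_{ef},\tilde\theta_{ef}$ drop out of the on-shell action: only their sums around $f$, fixed by $\sum_e\theta_{ef}=\epsilon\Theta_f$ and $\sum_e\tilde\theta_{ef}=\pi N_f$ with $N_f\equiv\sum_{e\subset f}n_e$, survive. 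I would compute $\Phi^\pm_f$ by substituting the face-holonomy formula from the previous proposition, $G_f(v)=\e^{\epsilon\Theta_f\hat X_f(v)}\,\e^{\pi N_f\,\star\hat X_f(v)}$, and projecting to SU(2)$\times$SU(2), using that $\star$ acts as $\pm 1$ on the self/anti-self-dual parts, together with the sign $-\sgn(\gamma^-)$ that enters through the normalization $\hat X^-_{ef}=X^{\gamma-}_{ef}/j^{\gamma-}_f$ and $X^{\gamma-}=-\gamma^-X^-$.

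Putting the pieces together, $S=2\irm\sum_f\bigl[j^{\gamma+}_f\Phi^+_f+j^{\gamma-}_f\Phi^-_f\bigr]$ splits into an $\epsilon\Theta_f$ part whose coefficient equals $(\gamma^++\gamma^-)j_f=\clA_f$, reproducing the Regge sum $\epsilon\sum_f\clA_f\Theta_f$, and an $N_f$ part with coefficient $(\gamma^+-\gamma^-)j_f$; exchanging $\sum_f\sum_{e\subset f}=\sum_e\sum_{f\supset e}$ converts the latter into $\pi\sum_e n_eJ_e$, whose exponential is $\prod_e\epsilon_e^{J_e}$ once we identify $\epsilon_e=\e^{\irm\pi n_e}$. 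The case $\gamma=0$ follows at once, since $\gamma^++\gamma^-=0$ annihilates the area prefactor and only $\prod_e\epsilon_e^{J_e}$ remains. The main delicate step will be the sign bookkeeping in the SU(2)$\times$SU(2) projection — in particular ensuring that the interaction of $\sgn(\gamma^-)$ with the $\star$-eigenvalues combines to give the clean $(\gamma^++\gamma^-)$ coefficient on the Regge side and $(\gamma^+-\gamma^-)$ on the $N_f$ side, which is exactly the cancellation of $\gamma$ dependence in the functional form of the on-shell action anticipated in the introduction.
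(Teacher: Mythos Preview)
Your proposal is correct and follows essentially the same route as the paper's proof: both insert the explicit wedge holonomy $G^\pm_{ef}=\e^{\frac{\irm}{2}(\theta_{ef}\pm\tilde\theta_{ef})\hat X^\pm_f}$ into the wedge action, obtain a purely imaginary contribution linear in $(\theta_{ef}\pm\tilde\theta_{ef})$, sum over wedges using $\sum_e\theta_{ef}=\epsilon\Theta_f$ and $\sum_e\tilde\theta_{ef}=\pi\sum_e n_e$, and regroup into $(\gamma^++\gamma^-)j_f\Theta_f$ and $(\gamma^+-\gamma^-)\sum_e n_e\sum_{f\supset e}j_f$ pieces. Your intermediate abstraction to SU(2) angles $\Phi^\pm_{ef}$ and the additivity argument is just a repackaging of the same computation; note only that the commutativity you invoke is cleanest when the wedge holonomies are based at the face center $f$ (as in the preceding proposition) rather than at a vertex $v$.
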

Before giving the proof a few remarks are in order. Firstly, in the EPR model the on--shell evaluation is trivial, in agreement with the claim that the EPR model is a quantization of the topological sector. Moreover, for general $\gamma$, the dependence on the Immirzi parameter drops out from the on--shell action. 
\renewcommand{\arraystretch}{1}

Secondly, when evaluating the semiclassical asymptotics of the effective amplitude $W^{\gamma}_\Delta(j_f)$, one has to sum over all classical configurations and hence over $\epsilon_e$. This sum gives zero unless $J_e$ is an even integer.
It is interesting to note that when $\gamma^{\pm}$ are  both odd integers the same condition arises in the spin foam model. 

To see this, note that if  $\gamma^{\pm}$ are both odd, the condition that the weight projects down to a function of SO(4) (i.e.\ $(\gamma^{+}-\gamma^{-})j_{f}\in \mathbb{Z}$) is satisfied without any restriction on $j_{f}$, since $(\gamma^{+}-\gamma^{-})$ is even.
Moreover, the amplitudes in the spin foam model require that the invariant SU(2) subspace 
$\mathrm{Inv} \left(\otimes_{f\supset e} V_{j^{\pm}_{f}} \right)$ is non--trivial.
This is the case if and only if $\sum_{f} j^{\pm}_{f}$ is integer--valued. Therefore, $\sum_{f} j_{f}$ is integer--valued and $J_{e}$ is even.
\begin{proof} 
As shown in the previous section, the wedge holonomy has the form
\be
G_{ef}(v) = \e^{\theta_{ef} \hat{X}_{f}(v)} e^{\tilde{\theta}_{ef} \star  \hat{X}_{f}(v)}.
\ee
where $\hat{X}_f(v)= X_f(v)/|X_f(v)|$. In the SU(2)$\times$SU(2) notation this condition reads 
\be
G^{\pm}_{ef}(v) = \e^{\frac{\irm}{2} (\theta_{ef}\pm \tilde{\theta}_{ef})\hat{X}^{\pm}_f(v)}\,.
\ee
Recall also that the bivectors $X^{\gamma\pm}_f$ and $X_f^{\pm}$ are related by
\be
X^{\gamma\pm}_f = \gamma^{\pm} X_f^{\pm}\,,\qquad |X_f^{\pm}|= j_f\,.
\ee
We insert this into the action, observing that $X^{\gamma\pm}_f/ |X^{\gamma\pm}_f| = \gamma^{\pm}/|\gamma^{\pm}| \hat{X}^{\pm}_f$, and obtain 
\bea
S 
&=&
\sum_{f} \left\{ 
2|\gamma^{+}| j_{f} 
\sum_{e\subset f } \ln\left(\tr\left[\frac12\left(1+\frac{\gamma^{+}{X}^{+}_{f}}{|\gamma^{+}{X}^{+}_{f}|}\right)G_{ef}^{+}\right]\right)\right. \\
& &\left.
{} \quad\,\,\,\,\,\,\, +
2|\gamma^{-}| j_{f} 
\sum_{e\subset f } \ln\left(\tr\left[\frac12\left(1+\frac{\gamma^{-}{X}^{-}_{f}}{|\gamma^{-}{X}^{-}_{f}|}\right)G_{ef}^{-}\right]\right)
\right\} \\
&=& 
\sum_{f} \left\{
\irm\gamma^{+} j_{f} 
\sum_{e\subset f } (\theta_{ef}+\tilde{\theta}_{ef})
+ 
\irm\gamma^{-} j_{f} 
\sum_{e\subset f } (\theta_{ef}-\tilde{\theta}_{ef})
\right\} \\
&=& 
\irm \epsilon \sum_{f} (\gamma^{+} + \gamma^{-}) j_{f} \Theta_{f}
+ \irm \pi(\gamma^{+} - \gamma^{-}) \sum_e    n_e\left(\sum_{f\supset e} j_{f}\right)\,. 
\eea
\qed
\end{proof}

\subsection{Asymptotic approximation}

In order to arrive at our final result we need to determine the asymptotic approximation of the effective amplitude \eq{Zeff} for large spins. 
As shown, this partition function can be expressed as an integral 
\be
I_{N} = \int \d x\; \e^{- N S(x)}
\ee
over a set of compact variables $x$. In our case the variables are group elements, so $S$ can be taken to be a periodic function.
It is customary to restrict the study of this type of integral to the case, where $S$ is pure imaginary and use the stationary phase approximation.
It is less well-know, but nevertheless true, that the stationary phase method is valid when $S$ is a complex function, provided $\Re(S)\geq 0$ (see \cite{asymptotic} Chapter 7.7). In this reference, it is shown that when $S$ is $C^{\infty}$
and if $ |S'|^{2} + \Re(S)$ is always strictly positive (with $|S'|^{2}=  \partial_{\mu}\bar{S}\partial^{\mu} S$), then the integral is exponentially small. 
More precisely, if $S$ is $C^{k+1}$ there exists a constant $C$ such that 
\be
I_{N} \leq \frac{C}{N^{k}} \frac{1}{\mathrm{min}\left(|S'|^{2} + Re(S)\right)^{k}}\,.
\ee 
This shows that the integral is exponentially suppressed as long as $S' \neq 0 $ \textit{or} $\Re(S) > 0$.

Therefore, the dominant contribution comes from configurations that are \textit{both} stationary points of the action $S$, and absolute minima of its real part \cite{asymptotic}. One says that $x_{c}$ is a generalized critical point if 
$|S'|^{2}(x_{c}) + \Re(S)(x_{c}) =0$. In case there are such points, we have the 
asymptotic approximation
\be
I_N
\sim 
\sum_{x_c}\; \left(\frac{2\pi}{N}\right)^{\frac{r}{2}} \frac{\e^{- N S(x_c)}}{\left({\rm det}_{r}(H')\right)^{\frac12}}\,\,.
\label{wellknown}
\ee
where $x_c$ are the stationary points of $S$, $r$ is the rank of the Hessian $H=\partial_{i}\partial_{j}S(x_{c})$, $H'$ is its invertible restriction on  
$\ker(H)^{\bot}$ and $\sigma$ is the signature of $H'$. When the stationary points are not isolated, one has an integration over a submanifold of stationary points whose dimension equals the dimension of the kernel $\ker(H)$. Note that for a generalized critical point the action $S(x_{c})$ is purely imaginary.

In our case, we have shown that the effective amplitude has no generalized critical points if 
$j_{f}$ is not Regge--like. Then, the previous theorem implies that the amplitude is exponentially suppressed. When $j_{f}$ is  Regge--like, there is, up to gauge--transformations, a discrete set of solutions labelled by $(E(j_{f}),\epsilon_{e}, \epsilon)$. This result is only valid if one restricts the integration to non--degenerate configurations $|X\wedge X| >\alpha$, with $\alpha$ an arbitrary small positive number.

When applied to the integral \eq{Zeff}, this gives us that
\be
W^{ND_{\alpha}\gamma}_{\Delta }(Nj_f) \quad\sim\quad
\frac{c_{\Delta}(j_f)}{\sqrt{N}^{r_\Delta}} \sum_{\epsilon,\epsilon_{e}} 
\exp\left(N S_{\Delta}^{\gamma}(E(j_{f}),\epsilon,\epsilon_{e})\right)\,
= \frac{c_{\Delta}(j_f)}{\sqrt{N}^{r_\Delta}}\left(\exp\left(N S_R\right) + \mbox{c.c.}\right)\,,
\ee
if the set $j_f$ is Regge--like and all $J_e$ even. Otherwise the amplitude is exponentially suppressed. 
If there are several tetrad fields $E(j_{f})$ that correspond to a given set $(j_{f})_{f}$ one should also sum over them.

While we have not computed the Hessian, our analysis can give us explicit information about its rank $r_\Delta$.
In our case, the space of integration is the space of $(u_{e},n_{ef},\gb_{ev},\hb_{ef})$ which is of dimension $D = 3 E + 2 W + 6\times 2 E + 6 W$.
Here, $E,W, F$ and $V$ denote the number of  edges, wedges, faces and vertices of $\Delta^*$.
As we have seen, the space of solutions is labelled by gauge transformations $(\lambda_{e},\lambda_{f},\lambda_{v})$ and two U(1) angles $(\theta_{ef},\tilde{\theta}_{ef})$ subject to one constraint per face. Thus, the dimension of the kernel of $H$ is 
$d = 6E + 6F + 6V + 2W -2F$. We can then compute the rank to be 
\be
r_{\Delta} \equiv D - d = 33E - 6V - 4F\,,
\ee 
using the fact that $W = 4E = 10V$.

\subsection{Degenerate sector}

In order to complete our analysis of the effective amplitude and show its asymptotic Regge-like behavior, we have restricted the summation to  
non--degenerate configurations.

One could wonder wether the degenerate contributions are dominant or subdominant in this semiclassical limit\footnote{For instance, in the analysis of the 
$10j$--symbol it was shown that the degenerate configurations were non--oscillatory, but dominant \cite{FLouapre6j, Barrettas}}. 
This amounts to asking which sector has the most degenerate Hessian, since the amplitude is suppressed by $1/N^{\frac12}$ to the power of the rank of the Hessian.
Thus, it is the sector with the higher--dimensional space of solutions (higher dimensional phase space) that dominates, or in other words the one with higher entropy.
 
In order to get an idea of the dimension of the space of solutions in both sectors, let us look at the solution of the simplicity and closure constraints at a single vertex. In the non--degenerate sector, it is given by 
\be
X_{ij} = V U_i\wedge U_j\,,\quad \sum_i U_i = 0\,.
\ee
This describes $10$ rotationally invariant degree of freedom, counting $5\times 4$ $U$'s subject to $4$ independent constraints minus $6$ rotations.
These 10 degrees of freedom match the $10$ area spins.

On the other hand of the spectrum we can look at the most degenerate contribution, where all the wedge products of $X$'s are zero.
In this case, the most degenerate solution is given by 
\be
X_{ij} = U\wedge N_{ij}\,,\quad \sum_i N_{ij} = 0\,,\quad U^{2}=1\,,\quad N_{ij}\cdot U = 0\,.
\ee
Due to the last equation, the $N_{ij}$ are, in effect, 3--dimensional vectors.
Now, the counting of rotationally invariant degrees of freedom gives $15$, 5 more degrees of freedom per vertex than in the non degenerate case.
Indeed, we have $3$ U's plus $3\times 10 $ $N$'s minus $4\times 3$ independent constraints minus $6$ rotations.
 
For each $i$ we can reconstruct a geometrical tetrahedron from $N_{ij}, j\neq i$, for which $N_{ij}$ are the area normal vectors. Hence the degenerate solution determines 5 tetrahedra. These 5 tetrahedra are ``glued together'' in the sense that the faces shared by tetrahedra have the same area.
However, they do not form a 4--simplex. In a 4--simplex the volume of each tetrahedron is fixed by the area of the faces, while in the degenerate case the 5 3--volumes are independent variables and thus increase the phase space dimension.
 
This argument indicates that the phase space dimension of the degenerate configurations is higher than the non--degenerate one by at most $5$ times the number of vertices. This result bears some similarity with the recent canonical analysis of \cite{JBianca}, where it was pointed out that the phase space dimension associated with spin networks is higher than the corresponding dimension for discrete geometries. Our reasoning suggests that this extra phase space corresponds to 4d degenerate solutions.

This analysis is suggestive, but not complete, since one would need to analyze the gluing equations and the other degenerate sectors.
However, it leads one to suspect that the degenerate configuration dominate the effective amplitude in the semiclassical limit if they are included. In this case, the non--degeneracy requirement would be necessary. One challenge is to be able to formulate this requirement at the level of the spin foam model and not only in the path integral representation.
Another possibility is that the degenerate contributions are suppressed when we couple the effective amplitude to a semiclassical boundary state. This is a scenario that has been realized in the case of the $10j$--symbol \cite{gravitonpropagatorLQG}.

\section{Summary and discussion}

In this work, we have studied the semiclassical properties of the Riemannian spin foam models FK$\gamma$.
We have shown that, in the semi-classical limit, where all the bulk spins are rescaled, the amplitude converges rapidly towards the exponential of $\irm$ times the Regge action, provided the face's spins can be understood as coming from a discrete geometry.
When the spins do not arise from a discrete geometry, the spin foam amplitude is exponentially suppressed.

There are several remarks to be made about this result:
First, it is shown for an \textit{arbitrary} triangulation and not only for the amplitude associated with a 4--simplex.
This should be contrasted with what was achieved in the context of the Barret--Crane model, where only one or two 4--simplices were considered \cite{gravitonpropagatorLQG}. An extension of these results to more 4--simplices seemed increasingly complicated (see \cite{satz} for a very recent discussion of this in the context of Regge calculus).
The second fact to be noticed is that the Immirzi dependence drops out in the semiclassical limit. This should indeed be the case, since nothing depends on the Immirzi parameter at the classical level (except when it is zero). Nevertheless it was not obvious from the original definition of the amplitude that this would happen. Also, the results shown here depend heavily on the details of the implementation of the simplicity constraints: they rely on the specific choice of the measure $D_{j,k}^{\gamma}$ (see eq.\ \eq{Z}). For instance, we cannot extend our results to the ELPR$\gamma$ model for $\gamma>1$ which includes the Barret--Crane model for $\gamma=\infty$. A fourth point concerns the fact that in spin foam models areas are the natural variables, whereas 
one needs access to edge lengths in order to have a discrete geometry. To formulate constraints on areas, so that they correspond to discrete geometries, has been so far one of the conundrums faced in the LQG/spin foam approach. Several studies have been launched in order to tackle this problem (see for instance \cite{barrett, Makela, dittrichspeziale}), but the results show that performing this explicitly is an incredibly difficult algebraic task. What we find quite remarkable is that it is not necessary to answer this question analytically to get the proper semiclassical limit of a spin foam model. The spin foam model ``knows'' which set of areas does or does not arise from a 4d geometry and it naturally suppresses the non--geometric phase in the semiclassical limit.

These results provide considerable evidence in favor of the proposed spin foam amplitude as a valid amplitude for quantum gravity, in the sense that it reproduces expected semiclassical behavior. There is, however, more work to be done to fully confirm this picture.

First of all, in order to obtain this result we have to restrict the summation to non--degenerate configurations.
We know how to implement this restriction in the path integral formulation, but not in terms of the spin foam model.
As we have argued, this restriction may be important in order to get the correct semi-classical limit, but a deeper analysis is clearly required to establish
 this firmly.

More crucially, we have shown the semiclassical property of the bulk amplitude, where the bulk spins are fixed and uniformly rescaled to large values. 
That is, we have demonstrated the proper semiclassicality for certain histories that one should sum over in computing amplitudes. 
What we are ultimately interested in is the semiclassical property of the \textit{sum} over amplitudes. Given a boundary spin network, we would like to sum over all spins in the interior compatible with the boundary spin network and show that the resulting amplitude gives an object that can be interpreted as the exponential of the Hamilton--Jacobi functional of a gravity action. Our result is a necessary condition for this to happen, but we have not shown that this is sufficient.

What would be required is that for given semiclassical boundary states peaked on large spins, the corresponding amplitude is peaked around large bulk spins as well; and that the semiclassical amplitude reduces effectively to a summation over discrete geometries with the Regge action.
In a sense, one needs that the large spin limit and the integration over the spins commute with each other. 
Whether this happens or not is not obvious: one might be worried, for instance, that the summation over spins is much less restricted than a summation over discrete geometries and that this will lead to stronger equations of motions. It might be, on the other hand, that the exponential suppression of non Regge--like configurations is strong enough to effectively reduce the summation to a sum over geometries.
This is an important question that deserves to be studied further.

An obvious open problem is whether our results can be extended to the Lorentzian case. We expect that this is possible, however, it has not been shown yet wether the present Lorentzian models admit a nice action representation, which is needed for our analysis.

Moreover, our work does not address the question of the continuum limit of spin foam models.
We have considered the semiclassical limit of discrete configurations on a fixed triangulation. 
One might want to take a continuum limit, where the number of boundary vertices of the spin network grows.
It is not clear if such a limit commutes with the semiclassical limit taken here.

Despite all these open questions, we feel that the semiclassicality shown here opens the way towards new, exciting developments in the spin foam approach to quantum gravity.

\begin{acknowledgments}
We thank Fernando Barbero, Bianca Dittrich, James Ryan, Simone Speziale, Thomas Thiemann and the participants of the 
Young Loops and Foams 08 conference (where this work was presented) for many discussions.
Research at Perimeter Institute is supported by the Government of Canada through Industry Canada and by the Province of Ontario through the Ministry of Research \& Innovation. 
\end{acknowledgments} 

\begin{appendix}

\section{Relation between co--tetrad and tetrad in a 4--simplex}
\label{relationbetweencotetradandtetradina4simplex}

Based on the duality \eq{orthogonalityrelationwithoutindices} between discrete tetrad and co--tetrad, we can prove a number of identities that are analogous to equations for the co--tetrad and tetrad in the continuum. Consider a vertex $v$ in $\Delta^*$ and label the vertices $p\subset v^*$ (and corresponding dual edges $e\supset v$) by lowercase letters $i,j,k \ldots = 1,\ldots, 5$. The tetrad and co--tetrad vectors $U_e(v)$ and $E_{ee'}(v)$ are written as $U^i$ and $E_{ij}$. We denote $\mathbb{R}^4$--indices by capital letters $I,J,K$ etc.
\begin{proposition}
\label{propositionappendixtetradcotetrad}
The equation
\be
U_{I}^{i} E_{mk}{}^I = \delta_{m}^{i}-\delta_{k}^{i}.
\label{UEdelta}
\ee
determines a bijection between non--degenerate vectors $E_{ij}\in \bR^4$, $i,j = 1,\ldots, 5$, satisfying 
\be
E_{ij} + E_{ji} = 0\,,\quad\quad E_{ij} + E_{jk} + E_{ki} = 0\quad \forall\; i,j,k = 1,\ldots, 5\,,
\label{closureE}
\ee
and non--degenerate vectors $U^{i}\in \bR^4$, $i= 1,\ldots,5$, for which $\sum_{i=1}^{5} U^{i} = 0$.

The map from $E_{ij}$ to $U^{i}$ is given by
\be
U^i = \frac{1}{3!\,V_4}\,\sum_{j_{1}, j_2 , j_{3}} \epsilon^{kij_{1}j_2 j_{3}} \star\left(E_{j_1k}\wedge E_{j_2k} \wedge E_{j_{3}k}\right)\,,
\label{UintermsofE}
\ee
where $k$ is \textit{any} vertex different from $i$. $U^i$ is independent of this choice thanks to the identity \eq{closureE}.
$V_4$ denotes the oriented volume of the 4--parallelotope spanned by the co--tetrad vectors,
\be
V_4 = \det\left(E_{21},\ldots,E_{51}\right)\,,
\ee
and we have set
\be
[\star(E_{1}\wedge \cdots \wedge E_{n})]_{I_{1} \cdots I_{4-n}} 
\equiv \epsilon_{I_{1}\cdots I_{4}} E_{1}^{I_{5-n}}\cdots E_{n}^{I_4}\,.
\ee
The norm of $U^i$ is proportional to the volume $V_3$ of the tetrahedron orthogonal to $U^i$:  
\be
|U^i| = \frac{V_3}{|V_4|}
\label{tetprop}
\ee
The inverse of $V_4$ equals the determinant of the $U$'s:
\be
\frac{1}{V_4} = \det\left(U_{21},\ldots,U_{51}\right)
\label{inversevolume}
\ee
The inverse map from $U$ to $E$ is specified by
\be
E_{jk} = \frac{1}{3!}\,V_4\,
\sum_{i_{1},i_2, i_{3}}\epsilon_{k j i_1 i_2 i_3} 
\star \left(U^{i_1}\wedge U^{i_{2}} \wedge U^{i_{3}}\right)\,.
\label{EintermsU}
\ee
More generally, the relation between $U$ and $E$ is given by
\be
U^{i_1}{}_{[I_1}\cdots U^{i_n}{}_{I_n]} 
= 
\frac{1}{(4-n)!\, V_4}\, \sum_{j_1\ldots j_{4-n}} \epsilon^{ki_1\ldots i_n j_1\ldots j_{4-n}}
\star\left(E_{j_1k}\cdots E_{j_{4-n}k}\right)_{I_1\ldots I_n}\,,
\qquad k\neq i_1,\ldots, i_n\,.
\label{generalrelationUE}
\ee
The special cases $n=1$ and $n=3$ return equations \eq{UintermsofE} and \eq{EintermsU} respectively.
For $n = 2$ one obtains
\be
V_4\, U^i{}_{[I} U^j{}_{J]} = \sum_{m, n} \epsilon^{kijmn}\, \frac{1}{2}\,\epsilon^{IJ}{}_{MN}\,E_{mk}{}^M E_{nk}{}^N \,,\qquad k\neq i,j\,.
\label{specialcasen=2}
\ee
\end{proposition}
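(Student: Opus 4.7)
The plan is to first establish the bijection via standard linear algebra at a reference vertex, then verify each explicit formula by a direct Cramer-type calculation with $\epsilon$-tensor identities. I would fix the reference vertex $k=1$ and exploit non-degeneracy of $E$, which (together with the closure relations \eq{closureE}) forces $\{E_{m1}\}_{m=2,\ldots,5}$ to be a basis of $\bR^4$. The defining relation \eq{UEdelta} specialized to $k=1$ and $i\in\{2,\ldots,5\}$ reads $U^i\cdot E_{m1} = \delta^i_m$, uniquely determining $U^i$ as the dual basis to $\{E_{i1}\}_{i=2,\ldots,5}$. The remaining vector $U^1$ is then fixed by $U^1\cdot E_{m1} = -1$ for $m=2,\ldots,5$, which forces $U^1 = -\sum_{i=2}^{5}U^i$, so the closure $\sum_i U^i=0$ is automatic. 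To upgrade this to \eq{UEdelta} for arbitrary pairs $(m,k)$, I would substitute $E_{mk} = E_{m1} - E_{k1}$ (again from \eq{closureE}) and compute $U^i\cdot E_{mk} = \delta^i_m - \delta^i_k$. The reverse map $U\mapsto E$ proceeds identically, viewing $\{U^i\}_{i=2,\ldots,5}$ as a basis whenever the $U$'s are non-degenerate.

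\textbf{Explicit formulas.} To verify \eq{UintermsofE}, I would set $k=1$, pick $i\in\{2,\ldots,5\}$, and contract the right-hand side with an arbitrary test vector $E_{n1}$: after unpacking $\star$ as $\epsilon_{IJKL}E_{j_11}^J E_{j_21}^K E_{j_31}^L$, the sum over $j_1,j_2,j_3$ collapses by antisymmetry of $\epsilon^{kij_1j_2j_3}$ to a term proportional to $\epsilon_{IJKL}E_{21}^I E_{31}^J E_{41}^K E_{51}^L = V_4$, which exactly cancels the $1/V_4$ prefactor and yields $\delta^i_n$ (the case $i=1$ follows by choosing $k\neq 1$). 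The main technical step is proving independence from the choice of $k$: substituting $E_{jk}=E_{j\ell}-E_{k\ell}$ into the triple wedge product, the ``cross'' terms cancel on contraction with $\epsilon^{kij_1j_2j_3}$, which is antisymmetric in $k$ and in the $j$'s. The inverse formula \eq{EintermsU} is then proved by the same Cramer-type argument with the roles of $E$ and $U$ exchanged.

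\textbf{Norms, determinants, and the general relation.} The norm formula \eq{tetprop} is obtained by taking the modulus of \eq{UintermsofE}: the quantity $|\star(E_{j_1k}\wedge E_{j_2k}\wedge E_{j_3k})|$ equals, in the paper's conventions, the 3-volume $V_3$ of the tetrahedron opposite to vertex $i$, because the three $E_{j_ak}$ are precisely its edges emanating from $k$. The identity \eq{inversevolume} is the standard fact that a dual basis inverts the determinant of its original, which falls out of the existence argument already. Finally, the general relation \eq{generalrelationUE} interpolates between the $n=1$ case \eq{UintermsofE} and the $n=3$ case \eq{EintermsU}; since both sides are totally antisymmetric in $(i_1,\ldots,i_n)$ and in $(I_1,\ldots,I_n)$, it suffices to evaluate a single representative component, which reduces to iterating \eq{UEdelta}. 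Throughout, the main obstacle I anticipate is the bookkeeping of $\epsilon$-tensor contractions, which I would tame by always reducing to a canonical reference vertex before evaluating; the $n=2$ special case \eq{specialcasen=2} then drops out from the general formula together with a routine $\epsilon\epsilon$ contraction identity.
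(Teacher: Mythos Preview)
Your proposal is essentially correct and follows the same contraction strategy as the paper: verify \eq{UintermsofE}, \eq{EintermsU} and \eq{generalrelationUE} by contracting with co--tetrad vectors, get \eq{inversevolume} from the dual--basis determinant relation, and obtain \eq{tetprop} from the norm of the triple wedge. Your linear--algebra framing of the bijection (build the dual basis to $\{E_{m1}\}$, then read off $U^1$ from closure) is in fact cleaner than what the paper sketches.

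The one substantive difference is the argument for $k$--independence of \eq{UintermsofE}. The paper introduces position vectors $P_i$ with $E_{ik}=P_i-P_k$ and $\sum_i P_i=0$, and rewrites $U^i$ as a sum over \emph{all} indices $k,j_1,j_2,j_3$ of $\epsilon^{kij_1j_2j_3}\star(P_{j_1}\wedge P_{j_2}\wedge P_{j_3})$, which makes the independence manifest. Your direct substitution $E_{j_ak}=E_{j_a\ell}-E_{k\ell}$ also works, but your sketch of what cancels is backwards: after substitution the ``main'' term $\epsilon^{kij_1j_2j_3}\star(E_{j_1\ell}\wedge E_{j_2\ell}\wedge E_{j_3\ell})$ actually \emph{vanishes}, because one of the $j_a$ is forced to equal $\ell$ and $E_{\ell\ell}=0$; it is precisely the single--cross term with one factor of $E_{k\ell}$ that survives, and a short computation (using $E_{\ell k}=-E_{k\ell}$ and one swap in the $\epsilon$) shows it reproduces the $\ell$--based formula. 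The antisymmetry of $\epsilon$ in $k$ versus the $j$'s does not by itself kill anything, since $k$ is fixed rather than summed. So your method is fine, but the bookkeeping needs to be redone; the paper's $P_i$ trick sidesteps this entirely.
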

\begin{proof}
For the first part of the proof, let us assume the vectors $E_{ik}$ with property \eq{closureE} are given and that the $U_i$'s satisfy relation \eq{UEdelta}.
The identity \eq{UintermsofE} is proven, like in the continuum, by contracting the left-- and right--hand side with $E_{jk}$. That the $U_i$'s close follows directly from \eq{UintermsofE}. Formula \eq{tetprop} can be derived by using \eq{UintermsofE} and the relation between volume and Gram's determinant. Identity \eq{inversevolume} follows from \eq{UEdelta} and the multiplication rule for determinants. By contraction and use of \eq{inversevolume}, we also verify eq.\ \eq{EintermsU}.

To demonstrate that the right--hand side is independent of $k\neq i$, it helps to regard the $E_{ik}$ as edge vectors of a 4--simplex in $\bR^4$. We can think of this 4--simplex as the image of the 4--simplex $\sigma\subset\Delta$ under an affine transformation. Let $P_1,\ldots,P_5\in\bR^4$ denote the images of the vertices $p_1,\ldots,p_5\subset\sigma$.
Then, the edge vectors are equal to
\be
E_{ik} = P_i - P_k\,.
\ee
Without loss of generality, we can assume that
\be
\sum_{i=1}^5 P_i = 0\,.
\ee
Using this, we deduce that
\be
U^i = \frac{1}{3!\, V_4}\, \sum_{k, j_1, j_2, j_3} \epsilon^{kij_1 j_2 j_3} \star\left(P_{j_1}\wedge P_{j_2}\wedge P_{j_{3}}\right)\,,
\ee
making the independence of $k\neq i$ in \eq{UintermsofE} manifest.

Conversely, suppose we have vectors $U_i$ that close and that the $E_{jk}$'s fulfill relation \eq{UEdelta}.
We then define vectors 
\be
P_j = \frac{1}{5\cdot 3!}\, V_4\,\sum_{k, i_1, i_2, i_3} \epsilon_{k j i_1 i_2 i_3} \star\left(U^{i_1}\wedge U^{i_2}\wedge U^{i_3}\right)
\ee
and verify that 
\be
E_{ik} = P_i - P_k\,.
\ee
Hence the vectors $E_{ik}$ close. 

Relation \eq{generalrelationUE} is demonstrated by contracting with $n$ $E$'s.
\qed
\end{proof}

\section{Reconstruction of 4--geometry}
\label{reconstructionof4geometry}

In this appendix, we complete the proof of proposition \eq{mainproposition}. In the first part, we will derive that the bivectors $X_f(v)$ arise from a geometric 4--simplex. A key step for this is that the factors $\alpha_{ee'}$ in eq.\ \eq{beforefactorization} factorize. 
In the second part, we derive relations among tetrad vectors between neighbouring vertices, showing that the tetrad and co--tetrad vectors define a consistent discrete geometry on the simplicial complex. We will prove, in particular, that the sign factors $\epsilon(v)$ in eq.\ \eq{proofXintermsofU} are the same for every vertex.

\subsection{Reconstruction of 4--simplex}
\label{reconstructionof4simplex}

Consider a vertex $v\subset\Delta^*$ and the edges $e_1,\ldots, e_5\supset v$. To simplify formulas, we use the abbreviations $X_{ij} \equiv X_{e_ie_j}$, $U_i \equiv U_{e_i}$ and $E_{ij} \equiv E_{e_ie_j}$.
\begin{proposition}
Let $X_{ij} = -X_{ji}$, $i,j=1,\ldots,5$, be non--degenerate bivectors (i.e.\ $|X_{ij}\wedge X_{kl}| > 0$) which satisfy the simplicity and closure constraint 
\bea
&& X_{ij}^{IJ}(\hU_{i})_{J} = 0\,, 
\label{simplicityconstraintinindexnotation} \\ 
&& \sum_{j\neq i} X_{ij} = 0\,.
\eea
Then, there are, modulo translations, precisely two 4--simplices whose area bivectors equal $\star X_{ij}$ and they are related by a reversal of edge vectors. That is, there are exactly two sets of vectors $E_{ij}\in \bR^4$, $i,j = 1,\ldots,5$, obeying the closure condition \eq{closureE}, such that
\be
X_{ij} = \epsilon \sum_{m,n} \frac{1}{2}\,\epsilon_{kijmn} \star\left(E_{mk}{}\wedge E_{nk}\right)\,,\quad k\neq i,j\,.
\ee
The sign factor $\epsilon$ is either $1$ or $-1$ $\forall\; i,j = 1,\ldots 5$. The two sets $\{E_{ij}\}$ are related by the SO(4) transformation $E_{ij}\to 
-E_{ij}$.
\label{reconstructionof4s}
\end{proposition}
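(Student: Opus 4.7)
The plan is to reconstruct the 4--simplex by first extracting a tetrad $U^i$ from the bivectors $X_{ij}$ and then invoking the tetrad/co--tetrad bijection of Proposition~\ref{propositionappendixtetradcotetrad}. The first step is to exploit simplicity and antisymmetry jointly: $X_{ij}^{IJ}(\hat U_i)_J=0$ together with $X_{ij}=-X_{ji}$ shows that $X_{ij}$ is annihilated by both $\hat U_i$ and $\hat U_j$. In four Euclidean dimensions this forces $X_{ij}$ into the one--dimensional subspace $\Lambda^2(\mathrm{span}(\hat U_i,\hat U_j)^\perp)$, so
\[
X_{ij}=\tilde\alpha_{ij}\,\star(\hat U_i\wedge\hat U_j),\qquad \tilde\alpha_{ij}=\tilde\alpha_{ji},
\]
with the linear independence of $\hat U_i,\hat U_j$ guaranteed by the non--degeneracy hypothesis.

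The crucial second step is a rank--one factorization of the symmetric matrix $\tilde\alpha_{ij}$. Feeding the previous form into the closure constraint yields $\hat U_i\wedge\sum_{j\neq i}\tilde\alpha_{ij}\hat U_j=0$, so $\sum_{j\neq i}\tilde\alpha_{ij}\hat U_j$ is parallel to $\hat U_i$. Because the five vectors $\hat U_j\in\bR^4$ admit, by non--degeneracy, a unique (up to scale) linear relation $\sum_j\mu_j\hat U_j=0$ with every $\mu_j$ non--zero, each row of the extended symmetric matrix $(\tilde\alpha_{ij})$ (with diagonal entries absorbing the proportionality constants) must be a scalar multiple of $(\mu_1,\ldots,\mu_5)$. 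Symmetry collapses these row--scalars to a single constant $c$, giving $\tilde\alpha_{ij}=\epsilon\,\alpha_i\alpha_j$ with $\epsilon=\sgn c\in\{\pm1\}$ and $\alpha_i=\sqrt{|c|}\,\mu_i\neq 0$. Defining $V_4\equiv\det(\alpha_2\hat U_2,\ldots,\alpha_5\hat U_5)$ and $U^i\equiv\alpha_i\hat U_i/\sqrt{|V_4|}$ then produces a closed tuple satisfying $X_{ij}=\epsilon V_4\,\star(U^i\wedge U^j)$.

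The reconstruction now follows directly from Proposition~\ref{propositionappendixtetradcotetrad}: the closed, non--degenerate $U^i$ uniquely determines a non--degenerate co--tetrad $E_{jk}$, and the $n=2$ identity~\eq{specialcasen=2} rewrites $\epsilon V_4\,\star(U^i\wedge U^j)$ in exactly the form $\epsilon\sum_{m,n}\tfrac12\epsilon_{kijmn}\star(E_{mk}\wedge E_{nk})$ asserted in the statement. For uniqueness, observe that a single--entry sign flip $\alpha_i\to -\alpha_i$ would violate $\tilde\alpha_{ij}=\epsilon\alpha_i\alpha_j$, so only the simultaneous global flip of all $\alpha_i$'s is admissible; by the odd dependence of $E$ on $U$ in~\eq{EintermsU} this induces precisely $E_{jk}\to -E_{jk}$, the $\mathrm{SO}(4)$ central inversion, which delivers the two edge--reversal--related 4--simplices of the statement (the translation freedom being automatic in any construction via edge vectors).

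The step I expect to require the most care is converting the non--degeneracy hypothesis $|X_{ij}\wedge X_{kl}|>0$ into its two concrete consequences used above, namely (a) linear independence of any four of the $\hat U_i$'s, and (b) non--vanishing of every $\mu_j$ in the linear relation. Both follow from the identity $X_{ij}\wedge X_{kl}=\tilde\alpha_{ij}\tilde\alpha_{kl}\det(\hat U_i,\hat U_j,\hat U_k,\hat U_l)\,\mathrm{vol}$, valid because in Euclidean 4D the Hodge star obeys $\star\omega\wedge\star\eta=\omega\wedge\eta$ on 2--forms; once this bookkeeping is in place, the remainder of the argument is elementary linear algebra plus a single invocation of the preceding appendix proposition.
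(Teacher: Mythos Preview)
Your overall strategy---derive $X_{ij}$ as a scalar times a wedge of the unit normals, factorize the symmetric coefficient matrix via the closure constraint, then invoke the tetrad/co--tetrad bijection of Proposition~\ref{propositionappendixtetradcotetrad}---is exactly the paper's approach. Your factorization argument (uniqueness up to scale of the linear relation among five vectors in $\bR^4$, forcing each row of $\tilde\alpha$ to be a multiple of $(\mu_j)$, then symmetry) is a clean repackaging of the paper's direct elimination and yields the same $\alpha_{ij}=\epsilon\,\alpha_i\alpha_j$.

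There is, however, a Hodge--star bookkeeping error that makes your last step fail as written. Reading the constraint literally as $X_{ij}^{IJ}(\hat U_i)_J=0$, you correctly get $X_{ij}=\tilde\alpha_{ij}\,\star(\hat U_i\wedge\hat U_j)$ and hence $X_{ij}=\epsilon V_4\,\star(U^i\wedge U^j)$. But identity~\eq{specialcasen=2} says $V_4\,U^i\wedge U^j=\sum_{m,n}\tfrac12\epsilon^{kijmn}\star(E_{mk}\wedge E_{nk})$ \emph{without} a star on the left; applying $\star$ to both sides therefore gives $X_{ij}=\epsilon\sum_{m,n}\tfrac12\epsilon_{kijmn}\,E_{mk}\wedge E_{nk}$, which is the Hodge dual of the formula in the proposition, not the formula itself. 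The paper's own proof (and the main text, eq.~\eq{simp1}) in fact uses the simplicity constraint in the form $(\star X_{ij})\cdot\hat U_i=0$, which yields $X_{ij}=\alpha_{ij}\,\hat U_i\wedge\hat U_j$ directly and then matches the conclusion via~\eq{specialcasen=2} without any extra star. So the appendix statement carries a typo (missing $\star$ on $X$ in the simplicity hypothesis); your claim that~\eq{specialcasen=2} ``rewrites $\epsilon V_4\,\star(U^i\wedge U^j)$ in exactly the form'' of the proposition is false, and you should either correct the constraint to $(\star X_{ij})\cdot\hat U_i=0$ at the outset or explicitly note the star discrepancy rather than asserting a match that does not hold.
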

\begin{proof}
The simplicity constraints \eq{simplicityconstraintinindexnotation} imply that 
\be
X_{ij} = \alpha_{ij} \hU_i \wedge \hU_j\,,
\ee
where $\alpha_{ij}$ is a symmetric matrix of normalization factors and the wedge product stands for the bivector
\be
\left(\hU_i \wedge \hU_j\right)^{IJ} = \hU_i{}^{[I} \hU_j{}^{J]} = \hU_i{}^I \hU_j{}^J - \hU_j{}^I \hU_i{}^J\,.
\ee
The closure constraint states that
\be
\sum_{j\neq i} \alpha_{ij} \hU_i \wedge \hU_j = \hU_i \wedge \sum_{j\neq i} \alpha_{ij} \hU_j = 0\quad \forall\; i = 1,\ldots, 5\,. 
\ee
Consequently,
\be
\sum_{j=1}^5 \alpha_{ij} \hU_j = 0
\label{alphaijUj}
\ee
for suitable diagonal elements $\alpha_{ii}$. 

Next we eliminate one of the five $\hU_j$ in the last equation, say, $\hU_m$. For arbitrary $k,l$, $k\neq l$,
\be
\sum_j \left(\alpha_{km}\alpha_{lj} - \alpha_{lm}\alpha_{kj}\right) \hU_j 
=
\sum_{j\neq m} \left(\alpha_{km}\alpha_{lj} - \alpha_{lm}\alpha_{kj}\right) \hU_j 
= 0\,.
\ee
Since the bivectors are non--degenerate, four of the five normal vectors $\hU_i$ must be linearly independent.
Therefore,
\be
\alpha_{km} \alpha_{lj} = \alpha_{kj} \alpha_{lm}\,.
\ee
In particular, for $l = j$,
\be
\alpha_{km} \alpha_{jj} = \alpha_{kj} \alpha_{jm}\,.
\ee
By non--degeneracy, all $\alpha_{ij}$ are non--zero, so
\be
\alpha_{km} = \frac{\alpha_{kj} \alpha_{jm}}{\alpha_{jj}} = \frac{\alpha_{kj} \alpha_{mj}}{\alpha_{jj}}\,.
\ee
Let us pick one $j = j_0$ and define
\be
\alpha_i \equiv \frac{\alpha_{ij_0}}{\sqrt{|\alpha_{j_0j_0}|}}\,.
\ee
Then,
\be\label{alpha}
\alpha_{ij} = \sgn(\alpha_{j_0j_0})\,\alpha_i \alpha_j
\ee
and the bivectors have the form
\be
X_{ij} = \tilde{\epsilon} \left(\alpha_i\hU_i\right)\wedge\left(\alpha_j\hU_j\right)\,,
\label{afterfactorization}
\ee
where $\tilde{\epsilon} = \sgn(\alpha_{j_0j_0})$ is a sign independent of $i$ and $j$. From eq.\ \eq{alphaijUj} we also know that
\be
\sum_j \alpha_j\hU_j = 0\,.
\label{sumajUj}
\ee
By taking the square of eq.\ \eq{afterfactorization}, we get
\be
j^2_{ij} = \alpha_i^2 \alpha_j^2 \sin^2 \theta_{ij}\,,\qquad \cos\theta_{ij} = \hat{U}_i \cdot \hat{U}_j\,,
\ee
which fixes the modulus of $\alpha_i$ given $j_{ij}$ and $\hat{U}_{i}$. Eq.\ \eq{sumajUj} implies furthermore that the signs $\sgn\,\alpha_i$ are fixed up to an overall 
sign change $\alpha_i\to -\alpha_i$, $i = 1,\ldots, 5$.

At this point, we can reconstruct the tetrad and co--tetrad vectors. Define
\be
U_i \equiv \frac{\alpha_i\hU_i}{\sqrt{|V_4|}} \quad\mbox{with}\quad  
V_4 \equiv \det\left(\alpha_{2}\hU_2,\ldots,\alpha_{5} \hU_5\right) \,.
\ee
Then, we obtain that
\be
\frac{1}{V_4} = \det\left(U_2,\ldots,U_5\right)\ee
and
\bea
X_{ij} = \tilde{\epsilon}\,|V_4|\,U_i\wedge U_j = \epsilon\,V_4\,U_i\wedge U_j\,,
\label{XijUiUj}
\eea
where $\epsilon \equiv \tilde{\epsilon}\, \sgn(V_{4})$.
By proposition \ref{propositionEUmainpart} and \ref{propositionappendixtetradcotetrad}, the $U_i$'s define corresponding dual vectors $E_{ij}$ such that
\be
X_{ij} = \epsilon \sum_{m,n} \frac{1}{2}\,\epsilon_{kijmn} \star\left(E_{mk}{}\wedge E_{nk}\right)\,,\quad k\neq i,j\,.
\label{XijEmkEnk}
\ee
\qed
\end{proof}

\subsection{Reconstruction of co--tetrad and tetrad}
\label{reconstructionofcotetradandtetrad}


Next we deal with the equations \eq{proofrelationbetweenvertices} that relate variables from neighbouring 4--simplices.
We consider an edge $e=(vv')$ and employ the following shorthand notation:
\bea
U_0\equiv U_e(v)\,, & \qquad & U_0'\equiv g_{vv'} U_e(v')\,, \\
U_i\equiv U_{e_i}(v)\,, & \qquad & U_i' \equiv g_{vv'} U_{e_i'}(v')\,, \\
E_{ij}\equiv E_{e_ie_j}(v)\,, & \qquad & E'_{ij} \equiv g_{vv'} E_{e_ie_j}(v')\,. 
\eea
The labels $i$ are chosen such that $(e_i\,e \,e'_i)$ corresponds to one of the four faces adjacent to $e$ (see fig. \ref{consistent}).
One can check that this ordering is compatible with our requirement that orientations of neighbouring 4--simplices are consistent.

\psfrag{1}{$e_1$}
\psfrag{2}{$e_2$}
\psfrag{3}{$e_3$}
\psfrag{4}{$e_4$}
\psfrag{1'}{$e'_1$}
\psfrag{2'}{$e'_2$}
\psfrag{3'}{$e'_3$}
\psfrag{4'}{$e'_4$}
\psfrag{e}{$e$}
\psfrag{v}{$v$}
\psfrag{v'}{$v'$}
\pic{consistent}{Choice of labelling at neighbouring vertices $v$ and $v'$.}{2.5cm}{consistent}

As seen in section \ref{solutions}, the exterior closure constraints leads to
\be
\label{closi} 
\sum_i U_i = -U_0\quad\mbox{and}\quad \sum_i U_i' = -U_0'\,.
\ee
Moreover, due to the eqns.\ \eq{proofrelationbetweenvertices}, the $U$ and $U'$ are related as follows:
\be
\label{relationU}
\frac{U_0'}{|U_0'|} = \tilde{\epsilon} \frac{U_0}{|U_0|}\,, \qquad 
X_{0i} = \epsilon V (U_0\wedge U_i) = \epsilon' V' (U_0'\wedge U_i')\,,
\ee
where $\epsilon, \epsilon', \tilde{\epsilon} = \pm 1$ and 
\be 
1/V \equiv \det\left(U_{1},U_{2},U_{3},U_{4}\right)\,,\qquad 1/V'\equiv \det\left(U_{1}',U_{2}',U_{3}',U_{4}'\right).
\ee
\begin{proposition}
The conditions (\ref{closi}, \ref{relationU}) imply that 
\be
\label{relUU'} 
\epsilon = \epsilon'\,,\quad 
\tilde{\epsilon} = \sgn(VV')\alpha\,,\quad
V U_0 = \alpha\, V' U_0' \quad\mbox{and}\quad 
U_i' = \alpha U_i + a_i U_0\,,
\ee 
where $\alpha$ is an arbitrary sign factor and $a_i$ are coefficients such that $\sum_i a_i = \alpha \left(1 - \frac{V}{V'}\right)$.
Moreover, for the co--tetrad vectors $E_{ij}$ and $E'_{ij}$ one has the identity
\be
E_{ij}'=\alpha E_{ij}\,.
\ee
\end{proposition}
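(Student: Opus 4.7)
My overall strategy is to convert the two geometric inputs (parallelism of $U_0$ and equality of bivectors) into a linear--algebraic decomposition of $U'_i$ on the basis $\{U_0,U_1,\ldots,U_4\}$, and then extract the claimed sign and scalar relations by matching two independent consequences of the closure constraints: the linear sum, and the determinant of the change of basis. The co--tetrad identity will then drop out of the tetrad/co--tetrad duality.

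\emph{Step 1: decomposition.} The condition on $U'_0$ lets me set $U'_0=\lambda U_0$ with $\lambda\equiv\tilde{\epsilon}\,|U'_0|/|U_0|$, so that $\sgn\lambda=\tilde{\epsilon}$. Substituted into $\epsilon V(U_0\wedge U_i)=\epsilon' V'(U'_0\wedge U'_i)$, this forces $U_0\wedge(\epsilon V\, U_i-\epsilon' V'\lambda\,U'_i)=0$, hence
\begin{equation*}
U'_i=\beta\,U_i+c_iU_0\,,\qquad \beta\equiv\frac{\epsilon V}{\epsilon' V'\lambda}\,,
\end{equation*}
for some real scalars $c_i$.

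\emph{Step 2: scalar constraints.} Summing the decomposition over $i=1,\ldots,4$ and using $\sum_i U_i=-U_0$, $\sum_i U'_i=-\lambda U_0$ yields $\sum_i c_i=\beta-\lambda$. On the other hand, eliminating $U_0=-\sum_jU_j$ rewrites $U'_i=(\beta-c_i)U_i-c_i\sum_{j\neq i}U_j$, so the change--of--basis matrix is $M_{ij}=\beta\delta_{ij}-c_i$, a rank--one perturbation of $\beta\,\mathbbm{1}$. The matrix determinant lemma gives $\det M=\beta^3(\beta-\sum_ic_i)=\beta^3\lambda$, and therefore $V/V'=\beta^3\lambda$. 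Inserting the definition of $\beta$ and simplifying produces $\epsilon\epsilon' V^2=V'^2\lambda^2$. Since all squared terms are positive this forces $\epsilon=\epsilon'$ and $|\lambda|=|V/V'|$, so $|\beta|=1$: $\beta$ is a sign, which I rename $\alpha$.

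\emph{Step 3: sign identifications.} From $\alpha=V/(V'\lambda)$ I read $VU_0=\alpha V'\lambda\,U_0=\alpha V'U'_0$. Taking signs gives $\alpha=\sgn(VV')\sgn\lambda=\sgn(VV')\tilde{\epsilon}$, equivalently $\tilde{\epsilon}=\sgn(VV')\alpha$. Using $V^2=V'^2\lambda^2$ one also finds $\alpha V/V'=V^2/(V'^2\lambda)=\lambda$, so with $a_i\equiv c_i$,
\begin{equation*}
\sum_i a_i=\beta-\lambda=\alpha-\alpha V/V'=\alpha\bigl(1-V/V'\bigr)\,,
\end{equation*}
completing the scalar part of the statement.

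\emph{Step 4: co--tetrad identity.} To prove $E'_{ij}=\alpha E_{ij}$, I apply the defining orthogonality $U^i\cdot E_{mk}=\delta^i_m-\delta^i_k$ from Proposition \ref{propositionappendixtetradcotetrad} at both vertices. Substituting $U'_0=\lambda U_0$ and $U'_i=\alpha U_i+a_iU_0$ into the equations at $v'$ gives $U_0\cdot(\alpha E'_{mk})=0$ and $U_i\cdot(\alpha E'_{mk})=\delta^i_m-\delta^i_k$ for $i,m,k\in\{1,\ldots,4\}$, which are the same inner products that characterize $E_{mk}(v)$ against the spanning set $\{U_0,U_1,\ldots,U_4\}$. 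Non--degeneracy therefore gives $\alpha E'_{mk}=E_{mk}(v)$, and since $\alpha^2=1$ this is the desired identity.

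\emph{Main obstacle.} The delicate step is Step 2: recognizing that the elimination $U_0=-\sum_jU_j$ presents the change of basis as a clean rank--one update of $\beta\,\mathbbm{1}$ (so that its determinant has the tidy form $\beta^3\lambda$), and then that the single scalar relation $\epsilon\epsilon' V^2=V'^2\lambda^2$ simultaneously forces $\epsilon=\epsilon'$ \emph{and} collapses $\beta$ to a sign. Everything else is essentially bookkeeping.
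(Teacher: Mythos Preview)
Your proof is correct and follows the same overall strategy as the paper: decompose $U'_i$ in the basis $\{U_0,U_i\}$, extract the scalar relations from a determinant identity, and then deduce the co--tetrad relation. There are two minor technical differences. For the determinant step, the paper keeps $U'_0$ as one column and computes $\det(U'_0,U'_1,U'_2,U'_3)$ directly (the $a_iU_0$ terms dropping out against $U'_0$), whereas you first eliminate $U_0=-\sum_jU_j$ and invoke the matrix determinant lemma on the rank--one perturbation $\beta\,\mathbbm{1}-c\otimes\mathbf{1}$; both routes give $V/V'=\beta^3\lambda$ and hence $\epsilon\epsilon'\,V^2=V'^2\lambda^2$. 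For the co--tetrad identity, the paper substitutes $U'_0,U'_i$ into the explicit wedge formula $E'_{jk}=\tfrac{V'}{3!}\,\epsilon_{jk}{}^{i_1i_2i_3}\star(U'_{i_1}\wedge U'_{i_2}\wedge U'_{i_3})$ and reads off $E'_{jk}=\alpha^3E_{jk}=\alpha E_{jk}$, while you instead use the defining duality $U^i\!\cdot E_{mk}=\delta^i_m-\delta^i_k$ and uniqueness on the spanning set $\{U_0,\ldots,U_4\}$. Your argument here is arguably cleaner, since it avoids tracking which index is $0$ in the triple wedge.
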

\begin{proof}
The equations \eq{relationU} tell us that $U_0'$ is proportional to $U_0$ and that $U_i'$ is a linear combination of $U_i$ and $U_0$. More 
precisely,
\be
\label{UU'} 
U_i' =  \tilde{\epsilon}\epsilon\epsilon' \frac{|U_0| V}{|U_0'| V'} U_i + a_i U_0\,,
\ee 
where $a_i$ are coefficients such that $\sum_i U_i' = -U_0'$. It follows that
$\sum_i a_i - \tilde{\epsilon} \epsilon\epsilon' \frac{|U_0| V}{|U_0'| V'} = -\tilde{\epsilon} \frac{|U_0'|}{|U_0|}$.
Using the relation \eq{UU'}, we obtain
\bea
\quad 1/V' &\equiv& {\rm det}\left(U_{1}',U_{2}',U_{3}',U_{4}'\right)
= {\rm det}\left(U_{0}',U_{1}',U_{2}',U_{3}'\right)\\
&=& \tilde{\epsilon} \frac{|U_{0}'| }{|U_{0}| }\left( \tilde{\epsilon}\epsilon\epsilon' \frac{|U_{0}| V}{|U_{0}'| V'}\right)^{3}
{\rm det}\left(U_{0},U_{1},U_{2},U_{3}\right)
= \epsilon\epsilon' \left( \frac{|U_{0}| V}{|U_{0}'| V'}\right)^{2} {1}/{V'}.
\eea
Thus, $\epsilon=\epsilon'$ and $|U_0| V = \pm |U_0'| V'$. By defining the sign factor
\be
\alpha \equiv \tilde{\epsilon}\frac{|U_0| V}{|U_0'| V'}\,,
\ee
we arrive at eq.\ \eq{relUU'}.

By using eq.\ \eq{EintermsU} of prop.\ \eq{propositionappendixtetradcotetrad}, we can now compute explicitly the relation between co--tetrad vectors for edges that are shared by the 4--simplices dual to $v$ and $v'$:
\be
E_{jk}' = \frac{1}{3!} V' \epsilon_{jk}{}^{i_1i_2i_3} \star\left(U_{i_1}'\wedge U_{i_2}'\wedge U_{i_3}'\right) 
= \alpha^3 V \epsilon_{jk}{}^{i_1i_2i_3} \star\left(U_{i_1}\wedge U_{i_2}\wedge U_{i_3}\right) 
= \alpha E_{jk}\,.
\ee
\qed
\end{proof}
The relation $E_{ij}' = \alpha E_{ij}$ shows that the $E_{ij}$ satisfy the metricity condition (iii) in the definition of a co--tetrad.
Therefore, the co--tetrad and tetrad vectors determine a consistent 4--geometry on the simplicial complex.
 
\end{appendix}

\end{document}